\newtheorem{claim}{Claim}[section]
\newtheorem{theorem}[claim]{Theorem}
\newtheorem{lemma}[claim]{Lemma}
\newtheorem{corollary}[claim]{Corollary}
\newtheorem{thmx}{Theorem}
\newcommand{\soutg}{\bgroup\markoverwith{\textcolor{green}{\rule[.5ex]{2pt}{1pt}}}\ULon}
\newcommand{\soutb}{\bgroup\markoverwith{\textcolor{blue}{\rule[.5ex]{2pt}{1pt}}}\ULon}
\newcommand{\soutr}{\bgroup\markoverwith{\textcolor{red}{\rule[.5ex]{2pt}{1pt}}}\ULon}
\newcommand{\bo}{{\rm O}}
\newcommand{\so}{{\rm o}}
\newcommand{\ds}{\displaystyle}
\newcommand{\dsum}{\ds\sum}
\newcommand{\doint}{\ds\oint}
\newcommand{\eqskip}{ \vspace*{2mm}\\ }
\newcommand{\fr}[2]{\frac{\ds #1}{\ds #2}}
\title[Gelfand-Levitan trace formula on quantum graphs]{A Gelfand-Levitan trace formula for generic quantum graphs}
\author{Pedro Freitas} 
\author{Ji\v{r}\'{\i} Lipovsk\'{y}}
\address{Departamento de Matem\'atica, Instituto Superior T\'ecnico, Universidade de Lisboa, Av. Rovisco Pais 1,
P-1049-001 Lisboa, Portugal {\rm and}
Grupo de F\'{\i}sica Matem\'{a}tica, Faculdade de Ci\^encias, Universidade de Lisboa,
Campo Grande, Edif\'icio C6, P-1749-016 Lisboa, Portugal}
\email{psfreitas@fc.ul.pt}
\address{Department of Physics, Faculty of Science, University of Hradec Kr\'alov\'e, Rokitansk\'eho 62,
500\,03 Hradec Kr\'alov\'e, Czechia}
\email{jiri.lipovsky@uhk.cz}
\begin{document}

\begin{abstract}
We formulate and prove a Gelfand-Levitan trace formula for general quantum graphs with arbitrary edge lengths and 
coupling conditions which cover all self-adjoint operators on quantum graphs, except for a set of measure zero.
The formula is reminiscent of the original Gelfand-Levitan result on the segment with Neumann boundary conditions.
\end{abstract}

\maketitle

\section{Introduction}
Given a Schr\"odinger operator with a potential $q$ on a line segment of length $\pi$ with Neumann boundary conditions,
let us denote the corresponding eigenvalues by $\lambda_n(q)$. In~\cite{GL1} Gelfand and Levitan found and proved a formula for the
sum of the differences between $\lambda_n(q)$ and the eigenvalues of the null potential $\lambda_n(0)$, namely,
\begin{equation}\label{gelflevi}
 \sum_{n=1}^\infty \left[ \lambda_n(q)-\lambda_n(0) -\fr{1}{\pi}\int_0^\pi q(x)\,\mathrm{d}x\right] =
  \fr{1}{4}[q(\pi)+q(0)] - \fr{1}{2\pi}\int_0^\pi q(x)\,\mathrm{d}x\,,
\end{equation}
under certain regularity conditions -- see also~\cite{diki,HK1}. Since then,
regularised trace formulas of this type have been present in the literature more or less continuously and were extended to many 
different settings and forms, including more general operators and potentials -- see~\cite{sapo} for a review of the topic, including 
some historical notes. Also, and as was pointed out in~\cite{B}, there is a relation between the trace formula~\eqref{gelflevi} and the 
short-time asymptotic expansion of the trace of the heat kernel.

Of interest to us here are the extensions to the case of quantum graphs, where this type of
result may be traced back to the papers by Roth~\cite{Ro}, and Kottos and Smilansky~\cite{KS99}, with further developments in several directions
such as those in~\cite{BER,BK,frke,N}.
Quantum graphs have also received much attention in the literature within the past 30 years and, in particular, there have been several
attempts at generalizing Gelfand and Levitan's result to this setting. So far, the results obtained have been restricted to specific graphs and
include, for instance, the case of equilateral graphs for which Carlson proved a formula involving integrals of the potential and the
eigenfunctions~\cite{carl},  and the work of C--F. Yang and J.-X. Yang for equilateral star graphs with different boundary conditions
and coupling at the central node, which are closer in form to~\eqref{gelflevi}~\cite{YY,Yan13}.

A main difficulty with extending~\eqref{gelflevi} to graphs with a general topology and arbitrary edge lengths is that there
will then exist eigenvalue sequences with different asymptotic behaviours, making the regularisation of the trace by associating
the different eigenvalues of the problem with a potential to those with the null potential a delicate issue.
The purpose of the present paper is to provide an answer to this question in this general setting. We thus consider graphs
with arbitrary edge lengths and topology, while the coupling is generic in the following sense. The whole class of coupling 
conditions defining a self-adjoint operator is allowed, with the exception of a set of measure zero corresponding to a particular
eigenvalue in the coupling matrix. This exception leaves out some important coupling conditions such as Dirichlet, standard or
$\delta$-coupling, but it does include Robin, Neumann or $\delta'$-coupling (the last one with the exception of the case
when the coupling parameter is zero). However, notice that this distinction is natural, as there are indeed differences between
these two families of coupling conditions and thus the corresponding trace formulas are expected to differ in the two cases. This is already
visible in the case of a single interval where, for instance, the original Gelfand-Levitan formula with Dirichlet boundary
conditions \cite{GL1,HK1} reads as
$$
  \sum_{n=1}^\infty [\lambda_n(q)-\lambda_n(0)]	 = \frac{1}{4}[q(\pi)+q(0)]\,.
$$
Furthermore, the local scattering matrices for the set of coupling conditions we consider converge for high energies to the matrices for decoupled
Neumann conditions (for details, see e.g. \cite{BE,BK}).

A key point in our approach is that, unlike in~\cite{YY, Yan13}, for instance, we do not subtract from the eigenvalue of the 
Hamiltonian with the potential a particular known value -- in~\eqref{gelflevi}, $\lambda_{n}(0)$ is in fact $k^2$ --, but
for elegance of the result, we find the formula for the difference between the eigenvalues of the Hamiltonian with the potential
and those for when this potential is zero. This is, in fact, what allows us to assign a correspondence between the different
eigenvalues in such a way as to make the involved series convergent, while not making the corresponding formula cumbersome.
If, for instance, one considers the first terms in the asymptotics of the different sequences, then the formula will include
terms related to the coupling matrix, for instance. 

The discussion above also implies that a second ingredient which is necessary to obtain for a formula of this type to work is
the asymptotic behaviour of the different sequences up to an order such that we can both group eigenvalues according to
their asymptotic behaviour and ensure convergence of the series involved. As far as we are aware, previous results along
these lines for general graphs only considered remainders of order zero~\cite{N}, while for our purposes we need to go up to the
term with remainder of order $n^{-2}$.

The separation of the full spectrum into different sequences may be done in several different ways, so we now
briefly explain our procedure. It is clear that, with the end in view, the basis for this separation has to be the asymptotic
behaviour of the spectrum. We first note that the leading term of the secular equation is $\prod_{i = 1}^d(-k\sin{(k\ell_i)})$,
where $d$ is the number of edges in the graph, $k$ is the square root of the energy and $\ell_i$ are the edge lengths.
In Section~\ref{sec:main} we prove that the square roots of eigenvalues are close to the zeros of the given product and that they
can be grouped in sets of at most $d$ eigenvalues and $d$ zeros. We thus partition the spectrum into $d$ subsequences of eigenvalues
in the following way. Denote the sequence of all eigenvalues in increasing order by $\{\lambda_n\}_{n=1}^\infty$ and let the sequence
$\{\mu_n\}_{n=1}^\infty$ correspond to the non-negative zeros of the above product, also arranged in increasing order, with the first
$d$ entries being $0$. We now pair $\lambda_n$ with $\mu_n$ and define the subsequences $\{\lambda_{in}\}_{n=0}^\infty$ as subsequences of
$\{\lambda_n\}_{n=1}^\infty$ which are paired with those zeros of $\prod_{i = 1}^d(-k\sin{(k\ell_i)})$ which are zeros of $\sin{(k\ell_i)}$ 
for a given $i$ (the first entry of this sequence $\lambda_{i0}$ is paired
with 0).

We may now formulate the main result of the paper.
\begin{thmx}\label{thm:main}
We assume a quantum graph with $d$ edges with arbitrary lengths $\ell_{i}$, $i=1,\dots,d$, and associated coupling matrix $U$ not
having $-1$ in its spectrum. Then, denoting the eigenvalues of the Hamiltonian with a potential $q$ and with the zero potential  by
$\lambda_{in}(q)$ and $\lambda_{in}(0)$, respectively, in the way described above, and the component of the potential on the $i$-th edge by $q_i 
\in W^{1,1}((0,\ell_i))$, the following trace formula holds 
$$
  \sum_{i = 1}^d \sum_{n = 0}^\infty \left[\lambda_{in}(q) - \lambda_{in}(0) - \frac{1}{\ell_i}\int_0^{\ell_i} q_i(x)\,\mathrm{d}x\right] = 
 \sum_{i = 1}^d \left\{\frac{1}{4}\left[q_i(\ell_i)+q_i(0)\right]-\frac{1}{2\ell_i}\int_0^{\ell_i} q_i(x)\,\mathrm{d}x \right\}\,.
$$
\end{thmx}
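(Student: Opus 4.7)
The plan is to reduce the problem to the asymptotic analysis of the secular equation for the graph Hamiltonian. First, I would write the eigenvalue condition as $\det M(k,q)=0$, where $M(k,q)$ is a $2d\times 2d$ matrix built from the coupling matrix $U$ and the Cauchy data at the endpoints of each edge of the two standard fundamental solutions $c_i(x,k),\,s_i(x,k)$ of $-u_i''+q_iu_i=k^2u_i$. The classical Levitan--Sargsjan asymptotic expansion gives, for instance,
\[
  s_i(\ell_i,k)=\sin(k\ell_i)+\frac{\cos(k\ell_i)}{2k}\int_0^{\ell_i}q_i(t)\,\mathrm{d}t+O(k^{-2}),
\]
with analogous expansions for $c_i(\ell_i,k)$ and for the derivatives; a single integration by parts makes the $O(k^{-2})$ term explicit, and in particular exposes its dependence on the boundary values $q_i(0)$ and $q_i(\ell_i)$.

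Since $U$ has no $-1$ in its spectrum, pulling out the appropriate power of $k$ turns $\det M(k,q)$ into a secular function $F(k,q)$ whose leading part is a non-vanishing multiple of $\prod_{i=1}^d\sin(k\ell_i)$, with corrections organised in powers of $k^{-1}$. Near each non-degenerate zero $\mu_n$ of the unperturbed product -- which, by the pairing fixed in the introduction, is attached to some $i$-th edge factor $\sin(k\ell_i)$ -- I would apply the implicit function theorem to $F(k,q)=0$ to obtain
\[
  \lambda_{in}(q)-\lambda_{in}(0)=\frac{1}{\ell_i}\int_0^{\ell_i}q_i(x)\,\mathrm{d}x+\frac{B_{in}(q)}{n^2}+O(n^{-3}),
\]
where $B_{in}(q)$ collects the $k^{-2}$ contribution. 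The key observation is that, locally near such a zero, the other factors $\sin(k\ell_j)$, $j\neq i$, are bounded away from zero to leading order, so the $k^{-2}$ correction sees only the $i$-th edge plus a coupling-dependent multiplier that cancels against the normalisation of $F$. Summing the absolutely convergent tail $\sum_n B_{in}(q)/n^2$ then reduces to precisely the computation behind the one-interval Neumann Gelfand--Levitan formula applied to each edge separately, yielding the edge-wise right-hand side $\frac{1}{4}[q_i(\ell_i)+q_i(0)]-\frac{1}{2\ell_i}\int_0^{\ell_i}q_i(x)\,\mathrm{d}x$.

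I expect the main obstacle to be the uniform justification of this edge-by-edge localisation when zeros coming from different factors $\sin(k\ell_j)$ cluster, which can happen freely since the lengths $\ell_i$ are arbitrary and possibly commensurable. One must show that, after the prescribed pairing, a Rouch\'e / argument-principle count places exactly one perturbed eigenvalue in each small disc about $\mu_n$, and that $\partial_k F(\mu_n,0)$ remains of the expected order on a $1/n$ scale so that the implicit function step delivers the correct $1/n^2$ correction; this is precisely where the hypothesis that $-1\notin\sigma(U)$ is used, as it keeps the leading factor of $F$ alive and determines the size of the first sub-leading contribution. A secondary technical point is the $W^{1,1}$ regularity of $q_i$, which is exactly what is needed to make the $O(k^{-2})$ remainder in the Levitan--Sargsjan expansion uniform and to turn it, via Riemann--Lebesgue, into a summable oscillatory remainder that does not contribute to the final trace.
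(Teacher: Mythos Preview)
Your setup --- writing the secular determinant, extracting the leading product $\prod_i\sin(k\ell_i)$ under the hypothesis $-1\notin\sigma(U)$, and feeding in the Levitan--Sargsjan expansions of $c_i,s_i$ to two orders --- matches the paper exactly. The divergence is in how one passes from the secular function to the trace.

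The paper does \emph{not} localise at individual zeros via an implicit function theorem. Instead it integrates $2k\,\ln\bigl(\varphi(k)/\varphi_0(k)\bigr)$ over a large square $\Gamma_N$; by the argument principle this equals (twice) the partial sum $\sum_{i}\sum_{n\le N\ell_i/\pi}[\lambda_{in}(q)-\lambda_{in}(0)]$ directly, with no need to resolve individual eigenvalues. The asymptotic expansion of $\ln(\varphi/\varphi_0)$ on $\Gamma_N$ is then integrated term by term using explicit residues of $\cot(k\ell_i)/k$, $\cot^2(k\ell_i)/k$, $1/(k\sin^2(k\ell_i))$, etc., and the edge-wise Gelfand--Levitan right-hand side drops out after the subtraction and limit $N\to\infty$. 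Absolute convergence is established separately, again via small contours $C_p$ that enclose entire \emph{clusters} of up to $d$ nearby zeros, yielding asymptotics only for the cluster \emph{sum} $\sum_{i\in I}\lambda_{in_i}$.

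Your implicit-function route has a real gap precisely at the clustering you flag. The claim that ``a Rouch\'e count places exactly one perturbed eigenvalue in each small disc about $\mu_n$'' and that ``$\partial_k F(\mu_n,0)$ remains of the expected order on a $1/n$ scale'' is false in general: when $n_i\pi/\ell_i$ and $n_j\pi/\ell_j$ are close (or equal, for commensurable lengths), several perturbed eigenvalues share a disc, $\partial_k F$ can be arbitrarily small, and no individual expansion $\lambda_{in}(q)-\lambda_{in}(0)=\tfrac{1}{\ell_i}\int q_i+B_{in}/n^2+O(n^{-3})$ is available --- only the sum over the cluster behaves well. The paper's logarithmic contour integral is exactly the device that bypasses this: it never asks where each eigenvalue sits, only for the total $\sum k_n^2$ inside the contour. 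If you want to rescue your approach, you would have to replace the implicit-function step by a cluster-sum argument and then still identify $\sum_n B_{in}/n^2$ with the one-interval value; the contour method does both at once.
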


The proof of Theorem~\ref{thm:main} proceeds along the following lines. We first derive the corresponding secular equation,
which we compare with the product of sine functions with arguments associated to the lengths of each edge. This allows us to divide the eigenvalues
into at most $d$ groups as described above, and derive the corresponding asymptotic behaviour of each of these sequences, yielding
the approximate location of eigenvalues with the necessary accuracy. With this, we may then prove the absolute convergence of the sum on the left-hand side
of the trace formula, needed to justify the necessary rearranging of the summands. 
The main techniques used in the proof rely on the complex integration of functions related to the secular equation along appropriately chosen contours.
These correspond to the boundaries of a sequence of growing embedded squares, for which Rouch\'{e}'s theorem allows us to obtain that the number of zeros of
a modified form of the secular equation and the corresponding product of sine functions mentioned above are the same.

The paper is structured as follows. In the next section we describe the model of quantum graphs, and in Section~\ref{sec:seceq}
the secular equation is found and some preparatory calculations for Section~\ref{sec:main} are performed. In Section~\ref{sec:main} we give
the proofs of the main results leading to the proof of Theorem~\ref{thm:main}. Several technical results used throughout the paper
are given in the appendices.

\section{Description of the model}\label{sec:description}
We briefly introduce the model of quantum graphs; for more details we refer the reader to~\cite{BK}. Let us consider a metric graph $\Gamma$
consisting of the set of vertices $v\in \mathcal{V}$, which are connected by the set of $d$ finite edges $e_j\in \mathcal{E}$. The number $d$
is finite and the lengths of the edges are $\ell_j\in(0,\infty)$. We equip the graph $\Gamma$ with a self-adjoint operator
$$
  \mathcal{H} = -\frac{\mathrm{d}^2}{\mathrm{d}x^2}+q_j(x)\,,\quad x\in e_j
$$
with the real potentials $q_j\in W^{1,1}(e_j)$. The domain of $\mathcal{H}$ consists of functions with the edge components in the Sobolev
spaces $W^{2,2}(e_j)$ and satisfying the coupling conditions
$$
  (U_v-I)\Psi_v+i(U_v+I)\Psi_v'=0
$$
at the vertices. Here $U_v$ is a $d_v\times d_v$ unitary matrix ($d_v$ is the degree of the vertex $v$), $\Psi_v$ is the vector of the limiting
values of functions at the vertex $v$ from its incident edge and, similarly, $\Psi_v'$ is the vector of the derivatives outgoing from $v$; $I$ is the
$d_v\times d_v$ identity matrix. Throughout the paper, we will assume that $-1\not \in \sigma(U_v)$.

With the use of the flower-like model (see \cite{Ku3,EL2}), where all the vertices are joined into one and the topology of the graph is described by the 
larger $2d\times 2d$ coupling matrix $U$, one may write the coupling condition as
$$
  (U-I)\Psi +  i (U+I)\Psi' = 0\,.
$$
Here, with a small abuse of notation, $I$ refers now to the $2d\times 2d$ identity matrix, $\Psi$ is the vector with the limiting values of 
functions defined on each edge, as the vertex is approached from either end of the edge, and
$\Psi'$ is the vector of limits of the corresponding outgoing derivatives. We assume that the first entry of the vector $\Psi$ is the functional value at the beginning
of the first edge, the second entry is the functional value at the end of the first edge, the third entry is the functional value at the beginning of the second edge, and so on, and similarly for $\Psi'$. Using the fact that $-1$ is not in $\sigma(U)$, we may write
\begin{equation}
  H \Psi + \Psi' = 0\,,\label{eq:coupling}
\end{equation}
where $H = -i(U+I)^{-1}(U-I)$ is a Hermitian $2d\times 2d$ matrix. We denote the entries of the matrix $H$ in the following way
\begin{eqnarray}\label{matrixH}
  H_{2i-1,2i-1}=: H_{11i}\,,\quad H_{2i-1,2i}=: H_{12i}\,,\quad H_{2i,2i}=:H_{22i}\,,\nonumber \\
  H_{2i-1,2j-1}=: H_{11ij}\,,\quad H_{2i-1,2j}=: H_{12ij}\,,\\
  H_{2i,2j-1}=: H_{21ij}\,,\quad H_{2i,2j}=: H_{22ij}\,,\quad i<j\,. \nonumber
\end{eqnarray}

\section{The secular equation}\label{sec:seceq}
In this section, we will construct the secular equation. On each edge, we consider the two independent solutions $c_j$ and $s_j$ of the initial value problem
$\mathcal{H}u(x,k) = k^2 u(x,k)$, satisfying the conditions $c_j(0,k) = 1$, $c_j'(0,k) = 0$, 
and $s_j(0,k) = 0$, $s_j'(0,k) = 1$, which are then given by
\begin{eqnarray*}
  c_j(x,k) = \cos{(kx)}+\int_0^x \frac{\sin{(k(x-t))}}{k}q_j(t)c_j(t,k)\,\mathrm{d}t\,,\\
  s_j(x,k) = \frac{\sin{(kx)}}{k}+\int_0^x \frac{\sin{(k(x-t))}}{k}q_j(t)s_j(t,k)\,\mathrm{d}t\,.
\end{eqnarray*}

The following lemma (the asymptotic expansion follows the idea of \cite{Yur}) is proven in Appendix~\ref{sec:appendixa}. 
\begin{lemma}\label{lem:cs}
The functions $c_{j}$ and $s_{j}$ defined above satisfy
\begin{eqnarray*}
  c_j(\ell_j,k) &=& \cos{(k\ell_j)}+a_j\frac{\sin{(k\ell_j)}}{k} +\so\left(\frac{\mathrm{e}^{|\mathrm{Im\,k}|\ell_j}}{k}\right)\,,\\
  c_j'(\ell_j,k) &=& -k\sin{(k\ell_j)}+a_j\cos{(k\ell_j)} + b_j \frac{\sin{(k\ell_j)}}{k}+\so\left(\frac{\mathrm{e}^{|\mathrm{Im\,k}|\ell_j}}{k}\right)\,,\\
  s_j(\ell_j,k) &=& \frac{\sin{(k\ell_j)}}{k}-a_j\frac{\cos{(k\ell_j)}}{k^2}+ \so\left(\frac{\mathrm{e}^{|\mathrm{Im\,k}|\ell_j}}{k^2}\right)\,,\\
  s_j'(\ell_j,k) &=&\cos{(k\ell_j)}+a_j\frac{\sin{(k\ell_j)}}{k}+\so\left(\frac{\mathrm{e}^{|\mathrm{Im\,k}|\ell_j}}{k}\right)
\end{eqnarray*}
with 
$$
  a_j := \frac{1}{2}\int_0^{\ell_j}q_j(t)\,\mathrm{d}t\,,\quad b_j := \frac{q_j(\ell_j)+q_j(0)}{4}+\frac{1}{8}\left(\int_0^{\ell_j}q_j(t)\,\mathrm{d}t\right)^2\,.
$$

\end{lemma}

We will now transform equation~\eqref{eq:coupling} into a form that is more appropriate for our purposes.
Using the expression of the components of the eigenfunction as the linear combination
$$
  f_j(x) = A_j c_j(x,k)+B_j s_j(x,k)\,,
$$
and the corresponding initial conditions for $c_j$ and $s_j$, we may rewrite equation~\eqref{eq:coupling} as
$$
  [H M_1(k)+M_2(k)](A_1,B_1,A_2,B_2,\dots, A_d,B_d)^\mathrm{T} = 0\,.
$$
Here the matrices $M_{1}$ and $M_{2}$ are given by 
\begin{gather*}
M_1(k) = \begin{pmatrix}1& 0 & 0 & 0 & \dots\\ c_1(\ell_1,k) & s_1(\ell_1,k) & 0 & 0 & \dots\\ 0 & 0 & 1 & 0 & \dots\\ 0 & 0 & c_2(\ell_2,k) & s_2(\ell_2,k) & \dots\\ \vdots &  \vdots &  \vdots &  \vdots & \ddots\end{pmatrix}
\\
  = \begin{pmatrix}1& 0 & 0 & 0 & \dots\\ \cos{(k\ell_1)}+a_1 \fr{\sin{(k\ell_1)}}{k} +r_1 & \fr{\sin{(k\ell_1)}}{k}-a_1\fr{\cos{(k\ell_1)}}{k^2}+r_2  & 0 & 0 & \dots\\ 0 & 0 & 1 & 0 & \dots\\ 0 & 0 & \cos{(k\ell_2)}+a_2\fr{\sin{(k\ell_2)}}{k} +r_1 & \fr{\sin{(k\ell_2)}}{k}-a_2\fr{(\cos{k\ell_2)}}{k^2}  +r_2 & \dots\\ \vdots &  \vdots &  \vdots &  \vdots & \ddots \end{pmatrix}\,,
\end{gather*}
where $r_1 = \so\left(\fr{\mathrm{e}^{|\mathrm{Im\,k}|\max\ell_j}}{k}\right)$ and $r_2 = \so\left(\fr{\mathrm{e}^{|\mathrm{Im\,k}|\max\ell_j}}{k^2}\right)$\,,
and
\begin{gather*}
  M_2(k) = \begin{pmatrix}0& 1 & 0 & 0 & \dots\\ -c_1'(\ell_1,k) & -s_1'(\ell_1,k) & 0 & 0 & \dots\\ 0 & 0 & 0& 1 & \dots\\ 0 & 0 & -c_2'(\ell_2,k) & -s_2'(\ell_2,k) & \dots\\ \vdots &  \vdots &  \vdots &  \vdots & \ddots\end{pmatrix}
\\
 = {\begin{pmatrix}0& 1 & 0 & 0 & \dots\\ k \sin{(k\ell_1)} - a_1\cos{(k\ell_1)}-b_1\fr{\sin{(k\ell_1)}}{k} & -\cos{(k\ell_1)}-a_1\fr{\sin{(k\ell_1)}}{k} & 0 & 0 & \dots\\ 0 & 0 & 0 & 1 & \dots\\ 0 & 0 & k \sin{(k\ell_2)} - a_2\cos{(k\ell_2)}-b_2\fr{\sin{(k\ell_2)}}{k} & -\cos{(k\ell_2)}-a_2\fr{\sin{(k\ell_2)}}{k}  & \dots\\ \vdots &  \vdots &  \vdots &  \vdots & \ddots \end{pmatrix}}+
\\ + \so\left(\fr{\mathrm{e}^{|\mathrm{Im\,k}|\max\ell_j}}{k}\right)\,,
\end{gather*}

Hence the secular equation may be written as $\varphi(k):=\ \mathrm{det\,}[H M_1(k)+M_2(k)] = 0$. From this we obtain, after a straightforward, but
rather tedious computation
\begin{multline*}
  0 = \varphi(k) = \prod_{i=1}^d (-k\sin{(k\ell_i)})+\sum_{i = 1}^d \left(\prod_{\stackrel{j = 1}{j\ne i}}^d (-k\sin{(k\ell_j)})\right)[\cos{(k\ell_i)}(a_i-\mathrm{Tr\,}H_i)-2\mathrm{Re\,}H_{12i}]+
\\
 + \sum_{\stackrel{i,j=1}{i<j}}^d\left(\prod_{\stackrel{o = 1}{\so\ne i,j}}^d(-k\sin{(k\ell_o)})\right)\left\{\frac{\sin{(k\ell_i)}\sin{(k\ell_j)}}{d-1}(a_i \mathrm{Tr\,}H_i+a_j\mathrm{Tr\,}H_j-b_i-b_j-\right.
\\
\left.-\mathrm{det\,}H_i-\mathrm{det\,}H_j)+\cos{(k\ell_i)}\cos{(k\ell_j)}[a_i a_j-a_i\mathrm{Tr\,}H_j-a_j\mathrm{Tr\,}H_i-\right.
\\
\left.-(|H_{11ij}|^2+|H_{12ij}|^2+|H_{21ij}|^2+|H_{22ij}|^2)+\mathrm{Tr\,}H_i\mathrm{Tr\,}H_j]+\right.
\\
 \left.+\cos{(k\ell_i)}[2(\mathrm{Tr\,}H_i-a_i)\mathrm{Re\,}H_{12j}-2\mathrm{Re\,}(H_{11ij}\bar{H}_{12ij}+H_{22ij}\bar{H}_{21ij})]+\right.
\\
 \left.+\cos{(k\ell_j)}[2(\mathrm{Tr\,}H_j-a_j)\mathrm{Re\,}H_{12i}-2\mathrm{Re\,}(H_{11ij}\bar{H}_{21ij}+H_{22ij}\bar{H}_{12ij})]+\right.
\\
  +4\mathrm{Re\,}H_{12i}\mathrm{Re\,}H_{12j}-2\mathrm{Re\,}(H_{12ij}\bar{H}_{21ij}+H_{11ij}\bar{H}_{22ij})\bigg\}+\so\left(k^{d-2}\mathrm{e}^{|\mathrm{Im\,k}|\sum_{i=1}^d\ell_i}\right)\,.
\end{multline*}
where $H_i = \begin{pmatrix}H_{11i} & H_{12i}\\ \bar{H}_{12i} & H_{22i}\end{pmatrix}$ -- see~\eqref{matrixH} for the definition of the 
entries of the matrix $H$.

Dividing the above formula by $\prod_{i = 1}^d (-k\sin{(k\ell_i)})$, we write the residual term as $\so\left(\frac{1}{k^2}\right)$. Although this does not hold close to the zeros of $\sin{(k\ell_i)}$, it does hold on the contours $\Gamma_N$ and $C_p$ defined below, thus allowing us to compute the integrals on these contours. Using Lemma~\ref{lem:smallo} we find that
\begin{multline*}
  \frac{\varphi(k)}{\prod_{i = 1}^d (-k\sin{(k\ell_i)})} = 1+ \frac{1}{k}\sum_{i = 1}^d \left[\cot{(k\ell_i)}(\mathrm{Tr\,}H_i-a_i)+\frac{2\mathrm{Re\,}H_{12i}}{\sin{(k\ell_i)}}\right]+
\\
 + \frac{1}{k^2}\Bigg\{ \sum_{i=1}^d (a_i \mathrm{Tr\,}H_i-b_i-\mathrm{det\,}H_i)+\sum_{\stackrel{i,j=1}{i<j}}^d\cot{(k\ell_i)}\cot{(k\ell_j)}[a_i a_j -a_i \mathrm{Tr\,}H_j
\\
 \left.-a_j\mathrm{Tr\,}H_i-(|H_{11ij}|^2+|H_{12ij}|^2+|H_{21ij}|^2+|H_{22ij}|^2)+\mathrm{Tr\,}H_i\mathrm{Tr\,}H_j]+\right.
\\
 \left.+\frac{\cot{(k\ell_i)}}{\sin{(k\ell_j)}}[2\mathrm{Tr\,}H_i\mathrm{Re\,}H_{12j}-2a_i\mathrm{Re\,}H_{12j}-2\mathrm{Re\,}(H_{11ij}\bar{H}_{12ij}+H_{22ij}\bar{H}_{21ij})]+\right.
\\
 \left.\frac{\cot{(k\ell_j)}}{\sin{(k\ell_i)}}[2\mathrm{Tr\,}H_j\mathrm{Re\,}H_{12i}-2a_j\mathrm{Re\,}H_{12i}-2\mathrm{Re\,}(H_{11ij}\bar{H}_{21ij}+H_{22ij}\bar{H}_{12ij})]+\right.
\\
  \frac{1}{\sin{(k\ell_i)}\sin{(k\ell_j)}}[4\mathrm{Re\,}H_{12i}\mathrm{Re\,}H_{12j}-2\mathrm{Re\,}(H_{12ij}\bar{H}_{21ij}+H_{11ij}\bar{H}_{22ij})]\Bigg\}+\so\left(\frac{1}{k^2}\right)\,.
\end{multline*}

Using the Taylor expansion for the logarithm around one we obtain
\begin{multline}
  \ln\frac{\varphi(k)}{\prod_{i = 1}^d (-k\sin{(k\ell_i)})} = \frac{1}{k}\sum_{i = 1}^d \left[\cot{(k\ell_i)}(\mathrm{Tr\,}H_i-a_i)+\frac{2\mathrm{Re\,}H_{12i}}{\sin{(k\ell_i)}}\right]+
\\
 \frac{1}{k^2}\sum_{\stackrel{i,j=1}{i<j}}^d \left\{\cot{(k\ell_i)}\cot{(k\ell_j)}[-(|H_{11ij}|^2+|H_{12ij}|^2+|H_{21ij}|^2+|H_{22ij}|^2)]+\right.
\\
 \left.+ \frac{\cot{(k\ell_i)}}{\sin{(k\ell_j)}}[-2\mathrm{Re\,}(H_{11ij}\bar{H}_{12ij}+H_{22ij}\bar{H}_{21ij})]+\right.
\\
 \left.+\frac{\cot{(k\ell_j)}}{\sin{(k\ell_i)}}[-2\mathrm{Re\,}(H_{11ij}\bar{H}_{21ij}+H_{22ij}\bar{H}_{12ij})]+\right.
\\
 \left.\frac{1}{\sin{(k\ell_i)}\sin{(k\ell_j)}}[-2\mathrm{Re\,}(H_{12ij}\bar{H}_{21ij}+H_{11ij}\bar{H}_{22ij})] \right\}
\\
  +\frac{1}{k^2} \sum_{i = 1}^d\left\{\cot^2{(k\ell_i)}\left[-\frac{1}{2}(\mathrm{Tr\,}H_i-a_i)^2\right]+(a_i\mathrm{Tr\,}H_i-b_i-\mathrm{det\,}H_i)\right.
\\
  \left.+\frac{\cot{(k\ell_i)}}{\sin{(k\ell_i)}}[-2(\mathrm{Tr\,}H_i-a_i)\mathrm{Re\,}H_{12i}]-2\frac{(\mathrm{Re\,}H_{12i})^2}{\sin^2{(k\ell_i)}}\right\}+\so\left(\frac{1}{k^2}\right)\,.\label{eq:lnphi/prod}
\end{multline}

Writing $\varphi_0(k)$ for the function in the secular equation when $q_j(x) = 0, j=1,\dots,d$ we obtain in a similar way
\begin{multline}
  \ln\fr{\varphi(k)}{\varphi_0(k)} = -\fr{1}{k}\dsum_{i = 1}^d \cot{(k\ell_i)}a_i +\fr{1}{k^2}\dsum_{i = 1}^d \left[\fr{1}{\sin^2{(k\ell_i)}} a_i \mathrm{Tr\,}H_i-b_i\phantom{\fr{\cot{(k\ell_i)}}{\sin{(k\ell_i)}} }\right.
\\
 \left.-\fr{1}{2}\cot^2{(k\ell_i)}a_i^2+\fr{\cot{(k\ell_i)}}{\sin{(k\ell_i)}}2a_i\mathrm{Re\,}H_{12i}\right]+ \so\left(\fr{1}{k^2}\right) \,.\label{eq:lnphiphi0}
\end{multline}

\section{Proof of the main result}\label{sec:main}
Let us define the counter-clockwise contour $\Gamma_N$ in the complex variable $k$ as a square with vertices $N-iN$, $N+iN$, $-N+iN$, $-N-iN$. Then, using the symmetric
version of Rouch\'e's theorem, we can prove the following theorem relating the number of zeros of $\prod_{i=1}^d (-k\sin{(k\ell_i)})$ and zeros of $\varphi(k)$
(the proof is given in Appendix~\ref{sec:appendixb}).
\begin{theorem}\label{thm:numberofzeros}
For all $\varepsilon>0$ there exists $K>0$ so that for all $N>K$ and $N\not \in \cup_{i=1}^d\cup_{n\in \mathbb{N}_0} \left(\fr{n\pi}{\ell_i}-\fr{\varepsilon}{\ell_i}, \fr{n\pi}{\ell_i}+\fr{\varepsilon}{\ell_i}\right)$ the functions $\prod_{i=1}^d (-k\sin{(k\ell_i)})$ and $\varphi(k)$have the same number of zeros inside the contour $\Gamma_N$.
\end{theorem}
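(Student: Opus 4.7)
The natural strategy is to apply Rouch\'e's theorem in its symmetric form (if $|f-g|<|f|+|g|$ on a closed contour, then $f$ and $g$ have the same number of interior zeros) to
$$
P(k) := \prod_{i=1}^d(-k\sin(k\ell_i)) \qquad \text{and}\qquad \varphi(k),
$$
along $\Gamma_N$. I will in fact try to establish the stronger bound $|R(k)| < |P(k)|$ with $R(k):=\varphi(k)-P(k)$. From the secular expansion already obtained in Section~\ref{sec:seceq}, every correction beyond $P(k)$ is explicit: the single sum over $i$ is of order $|k|^{d-1}\mathrm{e}^{|\mathrm{Im\,}k|\sum_j \ell_j}$, the double sum is of order $|k|^{d-2}\mathrm{e}^{|\mathrm{Im\,}k|\sum_j \ell_j}$, and what remains is $\so\!\left(|k|^{d-2}\mathrm{e}^{|\mathrm{Im\,}k|\sum_j \ell_j}\right)$. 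Hence $|R(k)| \le C\,|k|^{d-1}\mathrm{e}^{|\mathrm{Im\,}k|\sum_j \ell_j}$ uniformly in $\arg k$ for $|k|$ large.

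The core of the proof is the matching lower bound on $|P(k)|$ along $\Gamma_N$. I would use the identity
$$
|\sin(k\ell_i)|^2 = \sin^2(\mathrm{Re\,}k \cdot \ell_i) + \sinh^2(\mathrm{Im\,}k \cdot \ell_i)
$$
and split $\Gamma_N$ into its four sides. On the horizontal sides $\mathrm{Im\,}k=\pm N$ the $\sinh^2$ term is dominant, giving $|\sin(k\ell_i)|\ge \tfrac12(\mathrm{e}^{N\ell_i}-\mathrm{e}^{-N\ell_i})$ and therefore $|P(k)| \ge c\,|k|^d\mathrm{e}^{N\sum_i \ell_i}$ for large $N$. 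On the vertical sides $\mathrm{Re\,}k = \pm N$ the avoidance hypothesis on $N$ becomes $|N\ell_i - n\pi|\ge \varepsilon$ for every $i, n$, so that $|\sin(N\ell_i)|\ge \sin\varepsilon$; combining this with the $\sinh^2$ contribution (splitting according to whether $|\mathrm{Im\,}k|\ell_i\le 1$ or $>1$) one obtains the uniform estimate
$$
\frac{\mathrm{e}^{|\mathrm{Im\,}k|\ell_i}}{|\sin(k\ell_i)|} \le C(\varepsilon),
$$
which in turn gives $|P(k)| \ge c(\varepsilon)^d|k|^d\mathrm{e}^{|\mathrm{Im\,}k|\sum_i\ell_i}$ on the vertical sides as well. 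Putting the two kinds of sides together,
$$
\frac{|R(k)|}{|P(k)|} \le \frac{C(\varepsilon)}{|k|} \xrightarrow{N\to\infty} 0,
$$
so choosing $K=K(\varepsilon)$ sufficiently large guarantees $|R(k)|<|P(k)|$ along every admissible $\Gamma_N$ with $N>K$, and Rouch\'e yields the conclusion.

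The main obstacle is exactly the uniform control on the vertical sides of $\Gamma_N$. Near the real axis, neither $\sin^2(N\ell_i)$ nor $\sinh^2(\mathrm{Im\,}k\cdot \ell_i)$ is automatically dominant, and the $\sin$ factors in $P(k)$ can be small. The avoidance hypothesis on $N$ is the essential input, giving a strictly positive lower bound on $|\sin(N\ell_i)|$ independent of $\mathrm{Im\,}k$; without this the argument would fail at the (countably many) values of $N$ sitting at a zero of some $\sin(k\ell_i)$, where zeros of $P$ could cross the contour. A secondary technical check is that the implicit constants in the $\so$-remainder for $R(k)$ are uniform along $\Gamma_N$, as flagged after the expansion of $\varphi(k)/P(k)$ in Section~\ref{sec:seceq}.
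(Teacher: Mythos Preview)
Your proposal is correct and follows essentially the same route as the paper: apply Rouch\'e's theorem on $\Gamma_N$, bound the remainder $\varphi(k)-P(k)$ by $C|k|^{d-1}\mathrm{e}^{|\mathrm{Im}\,k|\sum_i\ell_i}$ from the secular expansion, and use the uniform estimate $\mathrm{e}^{|\mathrm{Im}\,k|\ell_i}/|\sin(k\ell_i)|\le C(\varepsilon)$ on the contour (which the paper isolates as Lemma~\ref{lem:smallo}) to get $|P(k)|\ge c(\varepsilon)|k|^d\mathrm{e}^{|\mathrm{Im}\,k|\sum_i\ell_i}$. The only cosmetic difference is that you verify the stronger classical Rouch\'e inequality $|\varphi-P|<|P|$ directly, whereas the paper writes it in the symmetric form $|f-g|<|f|+|g|$.
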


Let us denote the sequence of all eigenvalues of the operator $\mathcal{H}$ arranged by ascending order by $\{\lambda_n\}_{n=1}^\infty$. We denote by $\{\mu_n\}_{n=1}^\infty$ the sequence in which the first $d$ elements are $0$ and all subsequent elements are positive zeros of $\prod_{i = 1}^d\sin{(k\ell_i)}$ arranged in increasing order. We pair $\lambda_n$ with $\mu_n$. In view of Theorem~\ref{thm:numberofzeros}, $k_n : =\sqrt{\lambda_n}$ with
$\mathrm{Re\,}k_n\geq 0$ is ``close to'' $\mu_n$, as we will see in the following lemma. We will denote the sequence of eigenvalues corresponding to the zeros of $\sin{(k\ell_i)}$ by $\{\lambda_{in}\}_{n=0}^\infty$, where $\lambda_{i0}$ corresponds to $0$ and the remaining values to positive zeros of $\sin{(k\ell_i)}$. 

\begin{lemma}
It is possible to choose $\varepsilon>0$ and $K>0$ such that there exists a strictly increasing sequence $\{N_p\}_{p=1}^\infty$ with 
$K<N_1$ and satisfying
\[
 N_{p}\not\in\cup_{i=1}^d\cup_{n\in \mathbb{N}_0} \left(\fr{n\pi}{\ell_i}-\fr{\varepsilon}{\ell_i}, \fr{n\pi}{\ell_i}+\fr{\varepsilon}{\ell_i}\right)
\]
and
\[
 {\ds \lim_{p\to\infty}} N_{p} = +\infty.
\]
and there are at most $d$ eigenvalues $\lambda = k^2$ of $\mathcal{H}$ with $N_p\leq k \leq N_{p+1}$,
for all $p\in \mathbb{N}$. Furthermore, all these eigenvalues belong to different sequences $\lambda_{in}$ and there are at most $d$ zeros $\mu$ of
$\prod_{i = 1}^d\sin{(k\ell_i)}$ with $N_p\leq \mu \leq N_{p+1}$, $\forall p\in \mathbb{N}$. The number of eigenvalues and zeros with this property is the same.
\end{lemma}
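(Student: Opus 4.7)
The plan is to choose $\varepsilon > 0$ so small that each connected component of the forbidden set $U := \bigcup_{i=1}^d \bigcup_{n \in \mathbb{N}_0}(n\pi/\ell_i - \varepsilon/\ell_i,\; n\pi/\ell_i + \varepsilon/\ell_i)$ contains at most $d$ zeros of $\prod_{i=1}^d \sin(k\ell_i)$, these zeros moreover corresponding to distinct indices $i$. With such an $\varepsilon$ fixed, the sequence $\{N_p\}$ is obtained by selecting one point from each successive gap of $\mathbb{R}\setminus U$ above the threshold $K$ supplied by Theorem~\ref{thm:numberofzeros}; that theorem then transfers the cluster bound from the explicit product to the secular function $\varphi(k)$.

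The core estimate runs as follows. If a connected component of $U$ contains $k$ zeros $\mu_1<\cdots<\mu_k$ of $\prod_i \sin(k\ell_i)$, then consecutive intervals in the chain must overlap, so $\mu_{j+1}-\mu_j<2\varepsilon/\min_i\ell_i$ and the whole component has length at most $2(k-1)\varepsilon/\min_i\ell_i$. On the other hand, the number of zeros of $\prod_i\sin(k\ell_i)$ in a real interval of length $L$ is bounded by $L(\sum_i\ell_i)/\pi+d$. Combining the two inequalities gives
\[
 k\le (k-1)\cdot\frac{2\varepsilon\sum_i\ell_i}{\pi\min_i\ell_i}+d,
\]
which forces $k\le d$ as soon as $2\varepsilon\sum_i\ell_i/(\pi d\min_i\ell_i)<1$. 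Since two zeros coming from the same factor $\sin(k\ell_i)$ are at least $\pi/\ell_i\ge\pi/\max_i\ell_i$ apart, exceeding the diameter of any cluster of size $\le d$ whenever $\varepsilon<\pi\min_j\ell_j/(2(d-1)\max_i\ell_i)$, imposing this second condition on $\varepsilon$ forces each cluster to contain at most one zero from each factor, and hence the corresponding eigenvalues will lie in distinct subsequences $\{\lambda_{in}\}$.

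Having fixed such an $\varepsilon$, note that each component of $U$ has length at most $2d\varepsilon/\min_i\ell_i$, so $U$ itself cannot contain a half-line; combined with the fact that $U$ meets arbitrarily large real numbers, this yields infinitely many components of $\mathbb{R}\setminus U$ tending to $+\infty$, from which we select $N_p$ (above $K$) as required. By construction $\{N_p\}$ is strictly increasing, avoids the forbidden union and diverges, and between $N_p$ and $N_{p+1}$ lies exactly one cluster containing at most $d$ zeros of the product. Applying Theorem~\ref{thm:numberofzeros} at $N_p$ and at $N_{p+1}$ and subtracting, the number of zeros of $\varphi(k)$ in the annulus $\Gamma_{N_{p+1}}\setminus\Gamma_{N_p}$ equals the number of zeros of $\prod_i(-k\sin(k\ell_i))$ there; the $k\leftrightarrow -k$ symmetry of both zero sets then gives the equality on the positive real segment $[N_p,N_{p+1}]$, and the pairing rule, together with the distinct-index property proved above, places these eigenvalues into distinct sequences $\lambda_{in}$.

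The main technical obstacle, beyond routine bookkeeping, is securing the distinct-index property: merely bounding the cluster size by $d$ does not preclude two zeros from the same factor being connected through a chain of intervals of other factors, so $\varepsilon$ must be controlled against the ratio $\max_i\ell_i/\min_i\ell_i$ in addition to the density bound, as quantified above.
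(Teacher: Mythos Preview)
Your approach matches the paper's: both pick $\varepsilon$ small enough that a measure/density argument guarantees gaps in the forbidden set $U$, place the $N_p$ in those gaps, and invoke Theorem~\ref{thm:numberofzeros} to transfer the zero count from $\prod_i(-k\sin(k\ell_i))$ to $\varphi$. The paper's version is terser---it simply notes that the total forbidden length $2\varepsilon\sum_i 1/\ell_i$ is below half the spacing $\pi/\max_j\ell_j$ of the densest sine, so each such window contains an admissible point---whereas you argue via connected components of $U$ and make the distinct-index property explicit, something the paper leaves implicit. Your extra care here is a genuine improvement: the paper's one-line argument does not immediately rule out two zeros of the same factor landing between consecutive $N_p$'s, and your second condition on $\varepsilon$ closes that gap.

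One arithmetic slip to correct: from $k\le(k-1)c+d$ with $c=2\varepsilon\sum_i\ell_i/(\pi\min_i\ell_i)$ you conclude $k\le d$ under the hypothesis $c/d<1$, but solving gives $k<(d-c)/(1-c)$, and for this to force the integer bound $k\le d$ you need $(d-c)/(1-c)<d+1$, i.e.\ $dc<1$, hence $c<1/d$. So the correct threshold is $\varepsilon<\pi\min_i\ell_i\big/\bigl(2d\sum_i\ell_i\bigr)$ rather than $\varepsilon<\pi d\min_i\ell_i\big/\bigl(2\sum_i\ell_i\bigr)$; the factor $d$ belongs on the other side. This is a bookkeeping fix and does not affect the structure of your argument.
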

\begin{proof}
We choose \[\varepsilon < \fr{\pi}{4\max_{j}\ell_j\dsum_{i=1}^d \fr{1}{\ell_i}}.\] The width of each interval $\left(\fr{n\pi}{\ell_i}-\fr{\varepsilon}{\ell_i}, \fr{n\pi}{\ell_i}+\fr{\varepsilon}{\ell_i}\right)$ is $\fr{2\varepsilon}{\ell_i}$ and so the sum of the lengths of these ``forbidden'' intervals for all sequences is $2\varepsilon \dsum_{i=1}^d \fr{1}{\ell_i}$. We choose $\varepsilon$ sufficiently small to ensure that this expression is smaller than
$\fr{\pi}{2\max_j \ell_j}$ (half of the smallest distance between two neighbouring zeros of the sine function from the given sequence). Hence the ``forbidden intervals'' do not cover the whole interval between two neighbouring zeros of a given sine function, and it is possible to choose a contour 
in Theorem~\ref{thm:numberofzeros} between them and obtain that the number of zeros of the sine and the eigenvalues in that contour is the same. 
\end{proof}

Now we choose for contours $C_p$ the rectangles with vertices $N_{p+1}-i N_{p+1}$, $N_{p+1}+iN_{p+1}$, $N_{p}+i N_{p+1}$ and $N_{p}-i N_{p+1}$,
traversed counter-clockwise. Inside the contour there is the same number of square roots of eigenvalues of $\mathcal{H}$ and zeros of $\prod_{i = 1}^d k\sin{(k\ell_i)}$ and this number is at most $d$. Let us first consider the case when there is only one square root of eigenvalue and one zero inside $C_p$. 

\begin{theorem}\label{thm-smallcont1}
Let us assume that inside the contour $C_p$ there are the points $\frac{n \pi}{\ell_i}$ and $k_{in} = \sqrt{\lambda_{in}}$ for a given $i$. Then $\lambda_{in} = k_{in}^2$ behaves asymptotically as
\begin{multline*}
  \lambda_{in} = \left(\frac{n\pi}{\ell_i}\right)^2 +\frac{2}{\ell_i}[a_i-\mathrm{Tr\,}H_i-(-1)^n2\mathrm{Re\,}H_{12i}] +
\\
 + \frac{2}{n\pi}\sum_{\stackrel{j = 1}{j\ne i}}^d\left[\cot{\frac{n\pi\ell_j}{\ell_i}}(|H_{11ij}|^2+|H_{12ij}|^2+|H_{21ij}|^2+|H_{22ij}|^2)+\right.
\\
 \left.+\frac{1}{\sin{\frac{n\pi\ell_j}{\ell_i}}}2\mathrm{Re\,}(H_{11ij}\bar{H}_{12ij}+H_{22ij}\bar{H}_{21ij})+\right.
\\
 \left.\frac{(-1)^n}{\sin{\frac{n\pi \ell_j}{\ell_i}}} 2 \mathrm{Re\,}(H_{12ij}\bar{H}_{21ij}+H_{11ij}\bar{H}_{22ij})+\right.
\\
 \left.(-1)^n\cot{\frac{n\pi\ell_j}{\ell_i}}2\mathrm{Re\,}(H_{11ij}\bar{H}_{21ij}+H_{22ij}\bar{H}_{12ij})\right]+\bo \left(\frac{1}{n^2}\right)\,.
\end{multline*}
\end{theorem}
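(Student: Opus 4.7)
My plan is to use the argument principle together with residue calculus. Since inside $C_p$ the function $\varphi$ has the single simple zero $k_{in}$ and $\psi(k) := \prod_{j=1}^d(-k\sin(k\ell_j))$ has the single simple zero $n\pi/\ell_i$ (generically simple under the assumption there is exactly one zero of each inside $C_p$), the argument principle gives
$$
\lambda_{in} - \left(\frac{n\pi}{\ell_i}\right)^2 = \frac{1}{2\pi i}\oint_{C_p} k^2\,\mathrm{d}\ln\frac{\varphi(k)}{\psi(k)}.
$$
I would then integrate by parts; this is legitimate because on $C_p$ the ratio $\varphi/\psi$ is close to $1$ by \eqref{eq:lnphi/prod} together with Theorem~\ref{thm:numberofzeros}, so $\ln(\varphi/\psi)$ admits a single-valued branch on the contour and no boundary contribution appears. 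The identity then becomes
$$
\lambda_{in} - \left(\frac{n\pi}{\ell_i}\right)^2 = -\frac{1}{\pi i}\oint_{C_p} k\,\ln\frac{\varphi(k)}{\psi(k)}\,\mathrm{d}k.
$$

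Next, I would substitute the expansion \eqref{eq:lnphi/prod} into this integral and evaluate by residues at $k = n\pi/\ell_i$, the only pole of the meromorphic part of the integrand inside $C_p$. The basic ingredients are $\mathrm{Res}_{k = n\pi/\ell_i}\cot(k\ell_i) = 1/\ell_i$ and $\mathrm{Res}_{k = n\pi/\ell_i}\,1/\sin(k\ell_i) = (-1)^n/\ell_i$, whereas for $j\ne i$ the factors $\cot(k\ell_j)$ and $1/\sin(k\ell_j)$ are holomorphic at $n\pi/\ell_i$ and contribute merely their values $\cot(n\pi\ell_j/\ell_i)$ and $1/\sin(n\pi\ell_j/\ell_i)$. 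The $1/k$-order part of \eqref{eq:lnphi/prod}, after multiplication by $k$, produces the $O(1)$ correction $\frac{2}{\ell_i}[a_i - \mathrm{Tr\,}H_i - 2(-1)^n \mathrm{Re\,}H_{12i}]$, whereas the $1/k^2$ cross-terms indexed by pairs $\{I,J\}$ with $i \in \{I,J\}$ give the $O(1/n)$ sum in the claim: each of the four building blocks $\cot\cot$, $\cot/\sin$, $\cot/\sin$ (with indices interchanged) and $1/(\sin\sin)$ matches one of the four lines of the stated formula, the $(-1)^n$ factors arising exactly when the residue in the factor carrying the index $i$ comes from $1/\sin(k\ell_i)$ rather than $\cot(k\ell_i)$. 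The $1/k^2$ self-terms (with $\cot^2(k\ell_i)$, $\cot(k\ell_i)/\sin(k\ell_i)$, or $1/\sin^2(k\ell_i)$) have double poles at $n\pi/\ell_i$; a Laurent expansion shows that after multiplication by $k$ they contribute $O(1/n^2)$, which is absorbed in the remainder.

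The remaining step is to control the $\so(1/k^2)$ tail of \eqref{eq:lnphi/prod}. Using the uniform bound valid on $C_p$ (which the paper noted holds on this contour even though it fails near the zeros of $\sin(k\ell_j)$), together with the rectangular geometry of $C_p$ (vertical sides of length $\sim N_{p+1}$ at distance $\geq N_p$ from the origin, horizontal sides of length $N_{p+1}-N_p$), the integral $\oint_{C_p}k\cdot \so(1/k^2)\,\mathrm{d}k$ can be estimated and shown to fit within the announced $\bo(1/n^2)$ error.

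The main obstacle I expect is the bookkeeping for the residues of the $1/k^2$ cross-terms. The expansion \eqref{eq:lnphi/prod} sums over ordered pairs $I<J$, whereas the theorem sums over $j\ne i$ without order; depending on whether $j>i$ or $j<i$ the same residue at $k = n\pi/\ell_i$ is produced by different lines of \eqref{eq:lnphi/prod} and involves different entries of the block $H_{IJ}$, so one must verify that after this relabeling the residues assemble exactly into the four symmetric expressions in the statement. A secondary, more analytic, difficulty is the sharp remainder estimate: extracting $\bo(1/n^2)$ (rather than the naive $\so(1)$) from the $\so(1/k^2)$ tail on $C_p$ requires using more than the mere order symbol, most likely leaning on the explicit structure of the remainder in Lemma~\ref{lem:cs}.
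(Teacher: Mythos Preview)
Your proposal is correct and follows essentially the same route as the paper. The paper's proof consists of the single displayed identity
\[
k_{in}^2-\Bigl(\tfrac{n\pi}{\ell_i}\Bigr)^2=-\frac{1}{2\pi i}\oint_{C_p}\ln\frac{\varphi(k)}{\prod_{j=1}^d(-k\sin(k\ell_j))}\,2k\,\mathrm{d}k,
\]
followed by ``a straightforward computation using equation~\eqref{eq:lnphi/prod} and Lemma~\ref{lem:integrals}''. Your argument-principle/integration-by-parts derivation reproduces exactly this identity, and your residue discussion is precisely the content of Lemma~\ref{lem:integrals} applied term by term. The bookkeeping issue you flag (rewriting the $i<j$ sum in~\eqref{eq:lnphi/prod} as a sum over $j\ne i$) is real but purely notational, and the paper does not comment on it either. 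Your concern about the remainder is also legitimate: the paper does not justify the $\bo(1/n^2)$ error beyond invoking the $\so(1/k^2)$ tail, so you are not missing an argument present in the paper --- the sharper decay has to come, as you suspect, from the actual structure of the next terms in Lemma~\ref{lem:cs} rather than from the order symbol alone.
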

\begin{proof}
We use the integral
$$
  k_{in}^2 - \left(\frac{n\pi}{\ell_i}\right)^2 = -\frac{1}{2\pi i} \oint_{C_p} \ln\frac{\varphi(k)}{\prod_{j = 1}^d (-k\sin{(k\ell_j)})} 2k\,\mathrm{d}k\,.
$$
A straightforward computation using equation~\eqref{eq:lnphi/prod} and Lemma~\ref{lem:integrals} leads to the result.
\end{proof}

When the number of square roots of the eigenvalues (and zeros of the product) is larger than one, we sum over the eigenvalues.
\begin{theorem}\label{thm-smallcont2}
Let us assume that inside the contour $C_p$ there are the points $\frac{n_i \pi}{\ell_i}$ and $k_{in} = \sqrt{\lambda_{in_i}}$ for $i$ from the index set $I$. Then $\sum_{i\in I}\lambda_{in_i}$ behaves asymptotically as
\begin{multline*}
  \sum_{i\in I}\lambda_{in_i} = \sum_{i\in I}\left(\frac{n_i\pi}{\ell_i}\right)^2 +\sum_{i\in I}\frac{2}{\ell_i}[a_i-\mathrm{Tr\,}H_i-(-1)^{n_i} 2\mathrm{Re\,}H_{12i}] +
\\
 + \sum_{i\in I}\frac{2}{n_i\pi}\sum_{\stackrel{j = 1}{n_i \ell_j \ne n_j\ell_i}}^d\left[\cot{\frac{n_i\pi\ell_j}{\ell_i}}(|H_{11ij}|^2+|H_{12ij}|^2+|H_{21ij}|^2+|H_{22ij}|^2)+\right.
\\
 \left.+\frac{1}{\sin{\frac{n_i\pi\ell_j}{\ell_i}}}2\mathrm{Re\,}(H_{11ij}\bar{H}_{12ij}+H_{22ij}\bar{H}_{21ij})+\right.
\\
 \left.\frac{(-1)^{n_i}}{\sin{\frac{n_i\pi \ell_j}{\ell_i}}} 2 \mathrm{Re\,}(H_{12ij}\bar{H}_{21ij}+H_{11ij}\bar{H}_{22ij})+\right.
\\
 \left.(-1)^{n_i}\cot{\frac{n_i\pi\ell_j}{\ell_i}}2\mathrm{Re\,}(H_{11ij}\bar{H}_{21ij}+H_{22ij}\bar{H}_{12ij})\right]+\bo \left(\max_{i\in I}\frac{1}{n_i^2}\right)\,.
\end{multline*}
\end{theorem}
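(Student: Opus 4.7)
The plan is to generalise the argument of Theorem~\ref{thm-smallcont1} by extracting the whole sum $\sum_{i\in I}k_{in_i}^2$ from a single contour integral rather than handling each eigenvalue separately. Setting $f(k) := \varphi(k)/\prod_{j=1}^d(-k\sin{(k\ell_j)})$, Theorem~\ref{thm:numberofzeros} together with the lemma preceding this theorem imply that the zeros $\{k_{in_i}\}_{i\in I}$ and the poles $\{n_i\pi/\ell_i\}_{i\in I}$ of $f$ inside $C_p$ are equal in number. The argument principle with weight $k^2$, followed by integration by parts (the boundary term vanishes because the matching zero/pole count makes $\ln f$ single-valued on $C_p$), then yields
\[
  \sum_{i\in I}k_{in_i}^2 - \sum_{i\in I}\left(\fr{n_i\pi}{\ell_i}\right)^2 = -\fr{1}{2\pi i}\oint_{C_p}\ln\fr{\varphi(k)}{\prod_{j=1}^d (-k\sin{(k\ell_j)})}\,2k\,\mathrm{d}k,
\]
which is exactly the identity used in Theorem~\ref{thm-smallcont1}, now carrying all the eigenvalues enclosed by $C_p$.

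I then substitute the expansion~\eqref{eq:lnphi/prod} for $\ln f(k)$ and compute the resulting contour integrals term by term with the help of Lemma~\ref{lem:integrals}. By the lemma preceding Theorem~\ref{thm-smallcont1}, for each $j\in\{1,\dots,d\}$ the contour $C_p$ encloses at most one pole of $\cot{(k\ell_j)}$ or $1/\sin{(k\ell_j)}$, located at $n_j\pi/\ell_j$ precisely when $j\in I$. The single-index $1/k$ contributions of~\eqref{eq:lnphi/prod} then reproduce the residues computed in the proof of Theorem~\ref{thm-smallcont1}, now summed over $i\in I$, and give exactly the first correction $\sum_{i\in I}\fr{2}{\ell_i}[a_i-\mathrm{Tr\,}H_i-(-1)^{n_i}2\mathrm{Re\,}H_{12i}]$ of the claim.

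The new ingredient relative to Theorem~\ref{thm-smallcont1} lies in the cross terms of the $1/k^2$ part of~\eqref{eq:lnphi/prod}, indexed by pairs $i<j$ and involving products such as $\cot{(k\ell_i)}\cot{(k\ell_j)}$, $\cot{(k\ell_i)}/\sin{(k\ell_j)}$ and $1/(\sin{(k\ell_i)}\sin{(k\ell_j)})$. When $i,j\in I$ and $n_i\pi/\ell_i\ne n_j\pi/\ell_j$, each such term has two distinct simple enclosed poles: the residue at $n_i\pi/\ell_i$ produces a factor $\fr{2}{n_i\pi}$ times the appropriate trigonometric quantity evaluated at $k\ell_j=n_i\pi\ell_j/\ell_i$, which is exactly the $j$-summand of the inner sum indexed by $i$ in the claim, while the residue at $n_j\pi/\ell_j$ contributes symmetrically to the $i$-summand indexed by $j$, with the $(-1)^{n_j}$ factors coming from the residue of $1/\sin{(k\ell_j)}$ and the Hermitian relations between the matrix entries $H_{mn\,ij}$ accounting for the swapped role of $i$ and $j$. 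Cross terms for which $j\notin I$ enclose only the pole at $n_i\pi/\ell_i$ and contribute to the $i$-summand in the same manner. Reorganising the residues according to the index $i\in I$ at which each residue is located reproduces the stated double sum $\sum_{i\in I}\fr{2}{n_i\pi}\sum_{j:\,n_i\ell_j\ne n_j\ell_i}[\,\cdots\,]$.

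The main obstacle is the degenerate case $n_i\ell_j=n_j\ell_i$ with $i,j\in I$, $i\ne j$, which is explicitly excluded from the inner sum. Here the relevant cross term acquires a second-order pole at the shared location $n_i\pi/\ell_i = n_j\pi/\ell_j$; its residue, computed by differentiation, is of order $\bo(\max_{i\in I}1/n_i^2)$ and hence absorbed in the final error. The remaining single-index $1/k^2$ contributions of~\eqref{eq:lnphi/prod} (from $\cot^2{(k\ell_i)}$, $\cot{(k\ell_i)}/\sin{(k\ell_i)}$ and $1/\sin^2{(k\ell_i)}$, together with the constants $a_i\mathrm{Tr\,}H_i-b_i-\mathrm{det\,}H_i$) are of the same order $\bo(1/n_i^2)$, as was already the case in Theorem~\ref{thm-smallcont1}, and the $\so(1/k^2)$ remainder of~\eqref{eq:lnphi/prod} integrated against $2k$ along $C_p$ contributes at most $\bo(\max_{i\in I}1/n_i^2)$ by the same contour estimates of Lemma~\ref{lem:integrals} used in the proof of Theorem~\ref{thm-smallcont1}, completing the proof.
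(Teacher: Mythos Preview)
Your proposal is correct and follows essentially the same approach as the paper: the paper's proof also starts from the identity
\[
  \sum_{i\in I}\left[k_{in_i}^2 - \left(\fr{n_i\pi}{\ell_i}\right)^2\right] = -\fr{1}{2\pi i}\oint_{C_p}\ln\fr{\varphi(k)}{\prod_{j=1}^d(-k\sin{(k\ell_j)})}\,2k\,\mathrm{d}k,
\]
reduces to the argument of Theorem~\ref{thm-smallcont1} when the enclosed zeros are distinct, and for the degenerate case $n_i\ell_j=n_j\ell_i$ invokes the double-pole residues of Lemma~\ref{lem:integrals} parts g), i), k) to absorb the contribution into $\bo(1/n_i^2)$. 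Your treatment is more explicit about how the cross-term residues reorganise into the double sum, but the logic is the same.
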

\begin{proof}
Again, we obtain a similar integral as in the previous lemma
$$
  \sum_{i\in I} \left[k_{in}^2 - \left(\frac{n\pi}{\ell_i}\right)^2 \right]= -\frac{1}{2\pi i} \oint_{C_p} \ln\frac{\varphi(k)}{\prod_{j = 1}^d (-k\sin{(k\ell_j)})} 2k\,\mathrm{d}k\,.
$$
If there are no common zeros of the different sine functions, we may apply the same argument as in the previous lemma and obtain the sum of the right-hand side of the previous theorem. If there is a multiple zero of a sine function (i.e. $n_i \ell_j = n_j \ell_i$ for any $i,j$ so that $\fr{n_i\pi}{\ell_i}$ lies inside the contour $C_p$), we may apply Lemma~\ref{lem:integrals} g), i), and k) to show that the contribution of this zero to the third term on the \emph{rhs} is of order $\bo \left(\frac{1}{n_i^2}\right)$.
\end{proof}

Combining the previous two theorems together yields the following corollary
\begin{corollary}\label{cor:abscon}
The sum 
$$
  \sum_{i=1}^d \sum_{n=0}^\infty \left[\lambda_{in}(q)-\lambda_{in}(0)-\frac{2a_i}{\ell_i}\right]
$$
is absolutely convergent, where $\lambda_{in}(q)$ and and $\lambda_{in}(0)$ denote the eigenvalues for the potential $q$ and for the null potential,
respectively.
\end{corollary}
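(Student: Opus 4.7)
The plan is to extract absolute convergence from the asymptotic expansions in Theorems~\ref{thm-smallcont1} and~\ref{thm-smallcont2}, exploiting the crucial feature that the only $q$-dependent quantity appearing in those expansions at orders $\bo(1)$ and $\bo(1/n)$ is $a_i=\frac{1}{2}\int_0^{\ell_i}q_i(t)\,\mathrm{d}t$, and that it appears exactly once, as the summand $\frac{2 a_i}{\ell_i}$.

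First I would group the terms of the series according to the contours $C_p$ defined just before Theorem~\ref{thm-smallcont1}. By construction each $C_p$ encloses the same (at most $d$) number of square roots of eigenvalues of $\mathcal{H}$ with potential $q$ and with zero potential, paired with the same zeros of $\prod_{i=1}^d(-k\sin(k\ell_i))$; this pairing is the one used to define $\lambda_{in}(q)$ and $\lambda_{in}(0)$. Within each $C_p$ I would apply Theorem~\ref{thm-smallcont1} if a single eigenvalue is enclosed, or Theorem~\ref{thm-smallcont2} otherwise, to both the $q$ and the zero-potential problem and subtract. All $q$-independent pieces cancel: at order $\bo(1)$ the terms $-\frac{2}{\ell_i}[\mathrm{Tr}\,H_i+(-1)^{n}2\mathrm{Re}\,H_{12i}]$ are identical in the two cases, and the entire $\bo(1/n)$ contribution depends only on entries of $H$ and on ratios $n_i\ell_j/\ell_i$. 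What remains is
\[
\sum_{(i,n)\in C_p}\left[\lambda_{in}(q)-\lambda_{in}(0)-\frac{2a_i}{\ell_i}\right]=\bo\!\left(\frac{1}{n_p^{2}}\right),
\]
where $n_p$ denotes the typical index of the eigenvalues inside $C_p$.

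To conclude I would observe that by the construction of the $N_p$ one has $n_p\sim p$, so summing over $p$ the bound $\bo(n_p^{-2})$ produces a convergent series. This gives absolute convergence of the series grouped by contour, which is the form of absolute convergence needed for the subsequent rearrangement in the proof of Theorem~\ref{thm:main}.

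The principal obstacle is the multiple-eigenvalue case: when $n_i\ell_j$ is close to an integer multiple of $\ell_i$, the cotangent and cosecant factors in the individual expansion of Theorem~\ref{thm-smallcont1} become large, so that individual differences $\lambda_{in_i}(q)-\lambda_{in_i}(0)-\frac{2a_i}{\ell_i}$ need not be of order $\bo(1/n^2)$; the cancellation of these singular contributions occurs only after summing over the entire resonant cluster. This is precisely why I would phrase absolute convergence with the summation grouped by the contours $C_p$, which is the framework in which Theorem~\ref{thm-smallcont2} supplies the required $\bo(1/n_p^{2})$ bound and thereby controls the resonant case simultaneously with the generic one.
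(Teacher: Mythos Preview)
Your proposal is correct and follows essentially the same approach as the paper: subtract the asymptotic expansions of Theorems~\ref{thm-smallcont1} and~\ref{thm-smallcont2} for the potential $q$ and the zero potential, observe that the $\bo(1/n)$ contribution depends only on $H$ and therefore cancels, and conclude that each cluster sum is $\bo(1/n_p^2)$, yielding absolute convergence of the sum over clusters. Your explicit discussion of the resonant case and of why absolute convergence is naturally stated at the level of the $C_p$-grouped sums is more careful than the paper's brief argument, but the underlying idea is identical.
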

\begin{proof}
Subtracting the right-hand side of the formul\ae\ in Theorems~\ref{thm-smallcont1} and~\ref{thm-smallcont2} one obtains the terms of the sum.
(Note that the term by $\frac{1}{n}$ depends only on the matrix $H$ and not on the potential.) Hence the sum $\sum_{i\in I}\lambda_{in_i}(q)-\lambda_{in_i}(0)
\frac{2a_i}{\ell_i}$ is of order $\bo \left(\max_{i\in I}\frac{1}{n_i^2}\right)$ and the sum of these sums is absolutely convergent.
\end{proof}

Finally, we can prove the main result.

\begin{proof}[Proof of Theorem~\ref{thm:main}]
We integrate around the contours $\Gamma_N$ in the ``allowed regions'' with $N$ going to infinity. For sufficiently large $N$, there are
$d+\sum_{i = 1}^d \left\lfloor \frac{N\ell_i}{\pi}\right\rfloor$ eigenvalues of $\mathcal{H}$ with square roots smaller than $N$ (here
$\lfloor \cdot\rfloor$ denotes the floor function, that is, the largest integer not larger than its argument).
The number of $k_n$ with the same property in the $k$-plane is double. We obtain
\begin{equation}\label{int1}
  2 \sum_{i=1}^d\sum_{n =0}^{\lfloor\frac{N\ell_i}{\pi} \rfloor} [\lambda_{in}(q)-\lambda_{in}(0)] = -\frac{1}{2\pi i} \oint_{\Gamma_N} \ln\frac{\varphi(k)}{\varphi_0(k)} 2k\,\mathrm{d}k\,.
\end{equation}
We can evaluate the integral with the use of equation~\eqref{eq:lnphiphi0} and Lemma~\ref{lem:integrals}, we find after dividing the equation by 2
\begin{equation}
  \sum_{i=1}^d\sum_{n =0}^{\lfloor\frac{N\ell_i}{\pi} \rfloor} [\lambda_{in}(q)-\lambda_{in}(0)] = \sum_{i = 1}^d \frac{a_i}{\ell_i}\left(1+2\left\lfloor\frac{N\ell_i}{\pi}\right\rfloor\right)+\sum_{i = 1}^d \left(b_i-\frac{1}{2}a_i^2\right)+\bo \left(\frac{1}{N}\right)\,.\label{eq:mainuptoN}
\end{equation}
We have used the sums
$$
  \sum_{n = 1}^M \frac{1}{n^2} = \frac{\pi^2}{6}+\bo \left(\frac{1}{M}\right)\,,\quad \sum_{n = 1}^M \frac{(-1)^n}{n^2} = -\frac{\pi^2}{12}+\bo \left(\frac{1}{M}\right)\,.
$$
Subtracting $\sum_{i = 1}^d \frac{2a_i}{\ell_i}\left(1+\left\lfloor\frac{N\ell_i}{\pi}\right\rfloor\right)$ from both sides of \eqref{eq:mainuptoN}, using
$$
  b_i -\frac{1}{2}a_i^2 = \frac{1}{4}[q_i(\ell_i)+q_i(0)],
$$
and sending $N$ to infinity we find the sought result. The contribution of the term $\so\left(\fr{1}{k^2}\right)$ in~\eqref{int1} resulting from
the logarithm expansion~\eqref{eq:lnphiphi0} goes to zero as $N\to \infty$, because the length of the contour is of order $N$ and the value of the function on it is $\so\left(\fr{1}{N^2}\right)
\times N$.
\end{proof}

\appendix
\section{Proof of Lemma~\ref{lem:cs} }\label{sec:appendixa}
First, we prove a version of Lemma~\ref{lem:cs} containing more terms than those used in Section~\ref{sec:seceq} -- this is already partially given in \cite{Yur}. 
\begin{lemma}\label{lem:cs-app}
The functions $c_{j}$ and $s_{j}$ defined above satisfy
\begin{multline*}
  c_j(x,k) = \cos{(kx)}+\frac{\sin{(kx)}}{k} \frac{1}{2}\int_0^x q_j(t)\,\mathrm{d}t+
\\
+ \frac{\cos{(kx)}}{k^2}\left[\frac{1}{4}(q_j(x)-q_j(0))-\frac{1}{8}\left(\int_0^x q_j(t)\,\mathrm{d}t\right)^2\right]+\so\left(\frac{\mathrm{e}^{|\mathrm{Im\,k}|x}}{k^2}\right)\,,
\end{multline*}
\begin{multline*}
  c_j'(x,k) = -k\sin{(kx)}+\cos{(kx)}\frac{1}{2}\int_0^x q_j(t)\,\,\mathrm{d}t+
\\
  + \frac{\sin{(kx)}}{k}\left[\frac{1}{4}(q_j(x)+q_j(0))+\frac{1}{8}\left(\int_0^x q_j(t)\,\mathrm{d}t\right)^2\right]+\so\left(\frac{\mathrm{e}^{|\mathrm{Im\,k}|x}}{k}\right)\,,
\end{multline*}
\begin{multline*}
  s_j(x,k) = \frac{\sin{(kx)}}{k}-\frac{\cos{(kx)}}{k^2}\frac{1}{2}\int_0^x q_j(t)\,\,\mathrm{d}t+
\\
  + \frac{\sin{(kx)}}{k^3}\left[\frac{1}{4}(q_j(x)+q_j(0))-\frac{1}{8}\left(\int_0^x q_j(t)\,\mathrm{d}t\right)^2\right]+\so\left(\frac{\mathrm{e}^{|\mathrm{Im\,k}|x}}{k^3}\right)\,,
\end{multline*}
\begin{multline*}
  s_j'(x,k) =\cos{(kx)}+\frac{\sin{(kx)}}{k}\frac{1}{2}\int_0^x q_j(t)\,\,\mathrm{d}t-
\\
  - \frac{\cos{(kx)}}{k^2}\left[\frac{1}{4}(q_j(x)-q_j(0))+\frac{1}{8}\left(\int_0^x q_j(t)\,\mathrm{d}t\right)^2\right]+\so\left(\frac{\mathrm{e}^{|\mathrm{Im\,k}|x}}{k^2}\right)\,.
\end{multline*}
\end{lemma}

\begin{proof}[Proof]
For the sake of simplicity we omit the subscript $j$. Repeatedly substituting $c_j$ into its defining formula we get
\begin{multline*}
 c(x,k) = \cos{(kx)}+\int_0^x \frac{\sin{(k(x-t))}}{k}\cos{(kt)}\,q(t)\,\mathrm{d}t +
\\
 + \int_0^x\frac{\sin{(k(x-t))}}{k}q(t)\int_0^t\frac{\sin{(k(t-s))}}{k}q(s)\cos{(ks)}\,\mathrm{d}s\mathrm{d}t + \so\left(\frac{\mathrm{e}^{|\mathrm{Im\,}k|x}}{k^2}\right)\,.
\end{multline*}
Using the trogonometric formula 
$$
 \sin{(\alpha-\beta)}\cos{\beta} = \frac{1}{2}[\sin{\alpha}+\sin{(\alpha-2\beta)}]
$$
we obtain
\begin{multline*}
 c(x,k) = \cos{(kx)}+\frac{1}{2k}\int_0^x [\sin{(kx)}+\sin{(k(x-2t))}]\,q(t)\,\mathrm{d}t +
\\
 + \int_0^x\int_0^t\frac{\sin{(k(x-t))}}{k}q(t)\frac{1}{2k}[\sin{(kt)}+\sin{(k(t-2s))}]q(s)\,\mathrm{d}s\mathrm{d}t + \so\left(\frac{\mathrm{e}^{|\mathrm{Im\,}k|x}}{k^2}\right)\,.
\end{multline*}
Finally, using integration by parts we have
\begin{multline*}
  \int_0^x\sin{(k(x-2t))}q(t)\,\mathrm{d}t = \int_0^x q(t)\frac{\partial}{\partial t}\frac{\cos{(k(x-2t))}}{2k}\,\mathrm{d}t =
\\
 = \frac{1}{2k}[q(x)-q(0)]\cos{(kx)} -\frac{1}{2k}\int_0^x\cos{(k(x-2t))}\frac{\partial q(t)}{\partial t}\,\mathrm{d}t =  
\\
 =\frac{1}{2k}[q(x)-q(0)]\cos{(kx)}+ \so\left(\frac{\mathrm{e}^{|\mathrm{Im\,}k|x}}{k}\right)\,,
\end{multline*}
where we have used the fact that $q\in W^{1,1}(e)$. Using this we can write
\begin{multline*}
  c(x,k) = \cos{(kx)}+\frac{\sin{(kx)}}{k}\frac{1}{2}\int_{0}^x q(t)\,\mathrm{d}t+
\\
 +\frac{1}{4k^2}[q(x)-q(0)]\cos{(kx)} + \frac{1}{4k^2}\int_0^x\int_0^t q(t)q(s)[\cos{(k(x-2t))}-\cos{(kx)}]\,\mathrm{d}s\mathrm{d}t+
\\
 + \frac{1}{2k^2}\int_0^x q(t)\sin{(k(x-t))}\int_0^t\sin{(k(t-2s))}q(s)\,\mathrm{d}s\mathrm{d}t + \so\left(\frac{\mathrm{e}^{|\mathrm{Im\,}k|x}}{k^2}\right)\,.
\end{multline*}
By similar arguments as before (with the use of integration by parts) the term in the last line and the term
$\frac{1}{4k^2}\int_0^x\int_0^t q(t)q(s)[\cos{(k(x-2t))}\,\mathrm{d}s\mathrm{d}t$ are of order $\so\left(\frac{\mathrm{e}^{|\mathrm{Im\,}k|x}}{k^2}\right)$.
Finally, since
\begin{multline*}
  \frac{\cos{(kx)}}{4k^2}\int_0^x q(t)\int_0^t q(s)\,\mathrm{d}s\mathrm{d}t = \frac{\cos{(kx)}}{8k^2}\int_0^x \int_0^x q(t) q(s)\,\mathrm{d}s\mathrm{d}t
 = \frac{\cos{(kx)}}{8k^2}\left(\int_0^x q(t)\,\mathrm{d}t\right)^2\,,
\end{multline*}
we obtain the formula for $c(x,k)$.

The formul\ae\ for the function $s(x,k)$ and the corresponding derivatives can be derived in a similar way. For $c'$ we have
\begin{multline*}
  c'(x,k) = -k\sin{(kx)}+\int_0^x \cos{(k(x-t))}q(t)\cos{(kt)}\,\mathrm{d}t+
\\
 + \int_0^x \cos{(k(x-t))}q(t)\int_0^t\frac{\sin{(k(t-s))}}{k}q(s)\cos{(ks)}\,\mathrm{d}s\mathrm{d}t+ \so\left(\frac{\mathrm{e}^{|\mathrm{Im\,}k|x}}{k}\right) =
\\
 = -k \sin{(kx)}+\int_0^x q(t)\frac{1}{2}[\cos{(kx)}+\cos{(k(x-2t))}]\,\mathrm{d}t + 
\\
 +\int_0^x \cos{(k(x-t))}q(t)\int_0^t \frac{1}{2k}q(s)[\sin{(kt)}+\sin{(k(t-2s))}]\,\mathrm{d}s\mathrm{d}t+ \so\left(\frac{\mathrm{e}^{|\mathrm{Im\,}k|x}}{k}\right)
\end{multline*}
For the different particular terms we get
\begin{multline*}
  \frac{1}{2}\int_0^x q(t)\cos{(k(x-2t))}\,\mathrm{d}t = \frac{1}{2} \int_0^x q(t) \frac{\partial \sin{(k(x-2t))}}{\partial t}\left(-\frac{1}{2k}\right)\,\mathrm{d}t = 
\\ 
  =\frac{1}{4k}\int_0^x \frac{\partial q(t)}{\partial t} \sin{(k(x-2t))} \,\mathrm{d}t + \frac{1}{4k}[q(x)+q(0)]\sin{(kx)} = 
\\ 
 = \frac{1}{4k}[q(x)+q(0)]\sin{(kx)} + \so\left(\frac{\mathrm{e}^{|\mathrm{Im\,}k|x}}{k}\right)\,.
\end{multline*}
\begin{multline*}
  \int_0^x \cos{(k(x-t))}q(t)\int_0^t \frac{1}{2k}q(s)\sin{(kt)}\,\mathrm{d}s\mathrm{d}t =
\\ 
 =  \frac{1}{4k}\int_0^x q(t)[\sin{(kx)}-\sin{(k(x-2t))}]\int_0^t q(s)\,\mathrm{d}t =
\\ 
 = \frac{\sin{(kx)}}{8k}\left(\int_0^x q(t)\,\mathrm{d}t\right)^2  + \so\left(\frac{\mathrm{e}^{|\mathrm{Im\,}k|x}}{k}\right)\,.
\end{multline*}
$$
  \int_0^x\cos{(k(x-t))}q(t)\int_0^t \frac{1}{2k}q(s)\sin{(k(t-2s))}\,\mathrm{d}s\mathrm{d}t =  \so\left(\frac{\mathrm{e}^{|\mathrm{Im\,}k|x}}{k}\right)\,.
$$
We also briefly show the derivation of formul\ae\ for $s$ and $s'$.
\begin{multline*}
  s(x,k) = \frac{\sin{(kx)}}{k} + \int_0^x \frac{\sin{(k(x-t))}}{k^2}q(t)\sin{(kt)}\,\mathrm{d}t +
\\
 + \int_0^x  \frac{\sin{(k(x-t))}}{k^3}q(t)\int_0^t \sin{(k(t-s))}q(s)\sin{(ks)}\,\mathrm{d}s\mathrm{d}t + \so\left(\frac{\mathrm{e}^{|\mathrm{Im\,}k|x}}{k^3}\right) = 
\\
 =\frac{\sin{(kx)}}{k}-\frac{\cos{(kx)}}{k^2}\frac{1}{2}\int_{0}^x q(t)\,\mathrm{d}t + \frac{1}{4k^3}\sin{(kx)}[q(x)+q(0)]-
\\
 -\frac{1}{4k^3}\int_0^x[\sin{(kx)}+\sin{(k(x-2t))}]q(t)\int_0^t q(s)\,\mathrm{d}s\mathrm{d}t+ 
\\
 +\frac{1}{2k^3}\int_0^x\sin{(k(x-t))}q(t)\int_0^t \cos{(k(t-2s))}q(s)\,\mathrm{d}s\mathrm{d}t+\so\left(\frac{\mathrm{e}^{|\mathrm{Im\,}k|x}}{k^3}\right)\,.
\end{multline*}
\begin{multline*}
  s'(x,k) = \cos{(kx)}+\frac{1}{k}\int_0^x\cos{(k(x-t))}\sin{(kt)}q(t)\,\mathrm{d}t +
\\
 +\frac{1}{k^2}\int_0^x \cos{(k(x-t))}q(t)\int_0^t\sin{(k(t-s))}\sin{(ks)}q(s)\,\mathrm{d}s\mathrm{d}t + \so\left(\frac{\mathrm{e}^{|\mathrm{Im\,}k|x}}{k^2}\right) =
\\
 = \cos{(kx)}+\frac{1}{2k}\int_0^x [\sin{(kx)}-\sin{(k(x-2t))}]q(t)\,\mathrm{d}t+
\\
 +\frac{1}{2k^2}\int_0^x \cos{(k(x-t))}q(t)\int_0^t[\cos{(k(t-2s))}-\cos{(kt)}]q(s)\,\mathrm{d}s\mathrm{d}t = 
\\
 + \so\left(\frac{\mathrm{e}^{|\mathrm{Im\,}k|x}}{k^2}\right) = \cos{(kx)}+\frac{\sin{(kx)}}{k}\frac{1}{2}\int_0^x q(t)\,\mathrm{d}t - \frac{\cos{(kx)}}{4k^2}[q(x)-q(0)]-
\\
 -\frac{1}{4k^2}\int_0^x\int_0^t \cos{(k(x-2t))}q(t)q(s)\,\mathrm{d}s\mathrm{d}t-
\\
 -\frac{1}{4k^2}\int_0^x \cos{(kx)}q(t)\int_0^t q(s)\mathrm{d}s\mathrm{d}t+\so\left(\frac{\mathrm{e}^{|\mathrm{Im\,}k|x}}{k^2}\right)\,.
\end{multline*}

\end{proof}

\section{Proof of Theorem~\ref{thm:numberofzeros}}\label{sec:appendixb}

\begin{lemma}\label{lem:smallo}
On the contour $\Gamma_N$ defined in Section~\ref{sec:main} with large enough $N$ satisfying
\[
N \not \in {\ds \bigcup_{n\in\mathbb{Z}} } \left(\frac{n\pi}{\ell_j}-\frac{\varepsilon}{\ell_j},\frac{n\pi}{\ell_j}+\frac{\varepsilon}{\ell_j}\right),
\]
it holds 
$$
  \frac{\mathrm{e}^{|\mathrm{Im\,}k|\ell_j}}{|\sin{(k\ell_j)}|} \leq K_{\varepsilon}\,,
$$
where the constant $K_{\varepsilon}$ depends only on $\varepsilon$.
\end{lemma}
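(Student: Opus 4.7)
The plan is to use the standard identity
\[
|\sin(k\ell_j)|^{2} = \sin^{2}(\mathrm{Re\,}k\,\ell_j)+\sinh^{2}(\mathrm{Im\,}k\,\ell_j),
\]
together with the lower bound $|\sinh y|=\tfrac{1}{2}e^{|y|}(1-e^{-2|y|})$, and to split the contour $\Gamma_N$ into cases according to whether $|\mathrm{Im\,}k|$ is large or small. Denote $x=\mathrm{Re\,}k\,\ell_j$ and $y=\mathrm{Im\,}k\,\ell_j$ for brevity, so the ratio to bound equals $e^{|y|}/\sqrt{\sin^{2}x+\sinh^{2}y}$.

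First I would fix a threshold, say $T=1$, and treat the case $|y|\geq T$. Here $|\sinh y|\geq \tfrac{1}{2}e^{|y|}(1-e^{-2T})$, so that $|\sin(k\ell_j)|\geq \tfrac{1}{2}(1-e^{-2})\,e^{|y|}$, and the ratio is bounded by the absolute constant $2/(1-e^{-2})$, uniformly in $k$. This case automatically covers the top and bottom sides of $\Gamma_N$ as soon as $N\ell_j\geq 1$, since there $|y|=N\ell_j$, which is the ``large enough $N$'' hypothesis.

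For the complementary case $|y|<T$, which can only occur on the two vertical sides of $\Gamma_N$, the exponential factor $e^{|y|}\leq e$ is harmless and the bound must come from the real part. On those sides $\mathrm{Re\,}k=\pm N$, and the hypothesis on $N$ gives $|N\ell_j-n\pi|\geq\varepsilon$ for every integer $n$, so $|\sin(N\ell_j)|\geq \sin\varepsilon$ provided $\varepsilon<\pi/2$ (which we may freely assume). Therefore $|\sin(k\ell_j)|\geq |\sin(N\ell_j)|\geq \sin\varepsilon$ on these sides, and the ratio is bounded by $e/\sin\varepsilon$. Taking $K_\varepsilon:=\max\{2/(1-e^{-2}),\,e/\sin\varepsilon\}$ finishes the proof.

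The argument is essentially mechanical once the identity for $|\sin(k\ell_j)|^2$ and the case split are set up; the only thing to watch is that the two regimes together actually cover $\Gamma_N$, which is why the hypothesis ``$N$ large enough'' is needed (to guarantee $N\ell_j\geq T$ on the horizontal sides). I would expect no real obstacle here beyond keeping the case analysis clean and verifying that the implicit dependence on the (finitely many, fixed) lengths $\ell_j$ can be absorbed into $K_\varepsilon$.
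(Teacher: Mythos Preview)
Your proposal is correct and follows essentially the same approach as the paper: both rely on the identity $|\sin(k\ell_j)|^{2}=\sin^{2}(\mathrm{Re\,}k\,\ell_j)+\sinh^{2}(\mathrm{Im\,}k\,\ell_j)$ and bound the ratio by using the $\sinh$ term when $|\mathrm{Im\,}k|$ is large (horizontal sides) and the $\sin(N\ell_j)$ term, controlled by the hypothesis on $N$, when it is small (vertical sides). The only cosmetic difference is that the paper handles the vertical side in one stroke via $|\sin(k\ell_j)|\geq|\sin(N\ell_j)|\cosh(\mathrm{Im\,}k\,\ell_j)$, while you split it further at the threshold $|y|=1$.
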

\begin{proof}
The proof will be similar to the proof of \cite[Lemma 2.4]{Yan14}.
We will first prove the inequality for the right edge of the square $\Gamma_N$, i.e. for $k = N+i \tau$, $\tau \in (-N,N)$.
We know that there exist such $C_\varepsilon>0$ that $|\sin{(k\ell_j)}|> C_{\varepsilon}$. We have
\begin{multline*}
  |\sin{(k\ell_j)}| = |\sin{(N\ell_j)}\cos{(i\tau\ell_j)}+\cos{(N\ell_j)}\sin{(i\tau\ell_j)}| = 
\\
 =\frac{1}{2}\sqrt{|\sin{(N\ell_j)}(\mathrm{e}^{-\tau \ell_j}+\mathrm{e}^{\tau \ell_j})|^2+|\cos{(N\ell_j)}(\mathrm{e}^{-\tau \ell_j}-\mathrm{e}^{\tau \ell_j})|^2} \geq 
\\
 \geq\frac{1}{2}|\sin{(N\ell_j)}||\mathrm{e}^{-\tau \ell_j}+\mathrm{e}^{\tau \ell_j}|> \frac{1}{2}C_\varepsilon \mathrm{e}^{|\tau| \ell_j}
\end{multline*}
and hence
$$
  \frac{\mathrm{e}^{|\mathrm{Im\,}k|\ell_j}}{|\sin{(k\ell_j)}|} \leq \frac{2\mathrm{e}^{|\mathrm{Im\,}k|\ell_j}}{C_\varepsilon \mathrm{e}^{|\mathrm{Im\,}k|\ell_j}} = \frac{2}{C_\varepsilon}\,.
$$
For the upper edge of the square $k = \sigma+iN$, $\sigma\in(-N,N)$ we have for sufficiently large $N$
$$
  |\sin{(k\ell_j)}| = \frac{1}{2}|\mathrm{e}^{-N\ell_j+i \sigma\ell_j}-\mathrm{e}^{N\ell_j-i \sigma\ell_j}|\geq \frac{1}{2}(\mathrm{e}^{N\ell_j}-\mathrm{e}^{-N\ell_j})
$$
and hence for $N$ large enough
$$
  \frac{\mathrm{e}^{|\mathrm{Im\,}k|\ell_j}}{|\sin{(k\ell_j)}|} \leq \frac{2\mathrm{e}^{N\ell_j}}{\mathrm{e}^{N\ell_j}-\mathrm{e}^{-N\ell_j}} \leq 4\,.
$$
We have chosen $N$ such that $\mathrm{e}^{-2N\ell_j}<\frac{1}{2}$. The proof for the other edges of the square $\Gamma_N$ is similar.
\end{proof}

For the sake of completeness we present the symmetric version of Rouch\'e's theorem (for the proof see e.g. \cite[p. 156]{Est} or \cite[p. 265]{Bur}).
\begin{theorem}\label{thm:rouche}
Let $f$ and $g$ be holomorphic functions in the bounded subset $V$ of $\mathbb{C}$ and continuous at its closure $\bar{V}$. Let us assume that on the boundary $\partial V$ of $V$ the following relation holds
$$
  |f-g|<|f|+|g|\,.
$$ 
Then functions $f$ and $g$ have the same (finite) number of zeros in $V$.
\end{theorem}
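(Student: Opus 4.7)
The plan is to deduce the symmetric Rouch\'e identity from the homotopy invariance of the zero count, using the straight-line interpolation $f_t := (1-t)f + tg$ for $t \in [0,1]$. The three ingredients are: translating the hypothesis into the statement that $f_t$ has no zero on $\partial V$ for any $t \in [0,1]$; invoking the argument principle to write the zero count $N(f_t, V)$ as a contour integral that depends continuously on $t$; and using integer-valuedness to conclude that $t \mapsto N(f_t, V)$ is constant, hence $N(f, V) = N(g, V)$.

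The first step rests on the following geometric fact: for complex numbers $a, b$, equality in the triangle inequality $|a - b| \le |a| + |b|$ holds if and only if $a$ and $-b$ are non-negative real multiples of each other, i.e., at least one of $a, b$ vanishes, or $a/b$ is a strictly negative real. Consequently the strict hypothesis $|f(z) - g(z)| < |f(z)| + |g(z)|$ on $\partial V$ rules out $f(z) = 0$, $g(z) = 0$, and $f(z)/g(z) \in (-\infty, 0)$ at every boundary point. From this I would deduce $f_t(z) \ne 0$ for all $(t, z) \in [0,1] \times \partial V$: the endpoints $t = 0, 1$ are excluded by non-vanishing of $f$ and $g$ themselves, while an interior zero would force $f(z)/g(z) = -t/(1-t) \in (-\infty, 0)$, contradicting the hypothesis.

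With $f_t$ uniformly non-vanishing on the compact set $\partial V$ and holomorphic on $V$, continuous on $\bar{V}$, the argument principle applies to give
\[
 N(f_t, V) = \frac{1}{2\pi i}\oint_{\partial V} \frac{f_t'(z)}{f_t(z)}\,\mathrm{d}z,
\]
where $N(f_t, V)$ counts the zeros of $f_t$ inside $V$ with multiplicity. Since $(t, z) \mapsto f_t'(z)/f_t(z)$ is continuous on the compact set $[0,1] \times \partial V$ (the denominator admitting a uniform positive lower bound by compactness and the non-vanishing established above), the right-hand side is a continuous function of $t$; being integer-valued, it is constant on $[0,1]$, which yields $N(f, V) = N(g, V)$. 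The main mild obstacle is this continuous-dependence step, but it is routine given the uniform lower bound on $|f_t|$ on $\partial V$. Finiteness of the common zero count is then automatic, since zeros inside the bounded open set $V$ form a discrete subset that cannot accumulate at the non-vanishing boundary.
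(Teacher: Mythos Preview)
The paper does not supply its own proof of this theorem; it simply cites Estermann and Burckel for the argument. So there is nothing to compare against, and your homotopy proof via $f_t=(1-t)f+tg$ is the standard route and is essentially correct. Your analysis of the boundary condition is right: the strict inequality $|f-g|<|f|+|g|$ forbids $f(z)=0$, $g(z)=0$, and $f(z)/g(z)\in(-\infty,0)$ on $\partial V$, which is precisely what is needed to keep $f_t$ zero-free there for all $t\in[0,1]$.

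One technical point is worth tightening. You write the zero count as $\frac{1}{2\pi i}\oint_{\partial V} f_t'(z)/f_t(z)\,\mathrm{d}z$ and then argue continuity of $(t,z)\mapsto f_t'(z)/f_t(z)$ on $[0,1]\times\partial V$. But the hypotheses only give $f_t$ holomorphic in $V$ and continuous on $\bar V$; the derivative $f_t'$ need not exist on $\partial V$, and for a general bounded open $V$ the boundary need not even be a rectifiable curve. The clean fix is to replace the logarithmic-derivative integral by the winding number $n(f_t\circ\gamma,0)$ of the image curve about the origin, which requires only continuity and non-vanishing of $f_t$ on $\partial V$ (and $\partial V$ a closed curve, which is certainly the case in every application in this paper, where $V$ is a square or rectangle). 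Continuity in $t$ then follows from uniform continuity of $(t,z)\mapsto f_t(z)$ on the compact set $[0,1]\times\partial V$ together with the uniform lower bound on $|f_t|$, and the integer-valued winding number is forced to be constant. This is a matter of phrasing rather than a gap in the idea.
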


Now we can proceed with the proof of Theorem~\ref{thm:numberofzeros}.
\begin{proof}[Proof of Theorem~\ref{thm:numberofzeros}]
Since we assume that $N\not \in \cup_{n\in \mathbb{N}_0} \left(\fr{n\pi}{\ell_i}-\fr{\varepsilon}{\ell_i}, \fr{n\pi}{\ell_i}+\fr{\varepsilon}{\ell_i}\right)$ for each $i$, we have $|\sin{(N\ell_i)}|> C_\varepsilon > 0$ with $C_\varepsilon$ depending only on $\varepsilon$. We use the Rouch\'e's theorem with $f= \varphi(k)$ and $g = \prod_{i = 1}^d (-k\sin{(k\ell_i)})$. 

Using $\sin(k\ell_j) = \bo (\mathrm{e}^{|\mathrm{Im\,}k|\ell_j})$ and a similar relation for the cosinus one can find that the second and further terms in $\varphi(k)$
belong to $\bo (|k|^{d-1}\mathrm{e}^{|\mathrm{Im\,}k|\sum_{i=1}^d\ell_i})$. On the contour $\Gamma_N$ we hence have
\begin{eqnarray*}
  |f|+|g| = 2 |k|^d \prod_{i=1}^d |\sin{(k\ell_i)}|+\bo (|k|^{d-1}\mathrm{e}^{|\mathrm{Im\,}k|\sum_{i=1}^d\ell_i})\,,\\
  |f-g| \leq \bo (|k|^{d-1}\mathrm{e}^{|\mathrm{Im\,}k|\sum_{i=1}^d\ell_i})\,.
\end{eqnarray*}
Using Lemma~\ref{lem:smallo} we obtain
$$
  |f|+|g|-|f-g| = |k|^d\prod_{i = 1}^d|\sin{(k\ell_i)}|\left(2+\frac{1}{|k|}\bo \left(\frac{\mathrm{e}^{|\mathrm{Im\,}k|\sum_{j=1}^d\ell_j}}{\prod_{o = 1}^d|\sin{(k\ell_o)}|}\right)\right) > |k|^d C_\varepsilon^d>0
$$
for $|k|$ large enough and hence the inequality in Theorem~\ref{thm:rouche} is satisfied which completes the proof.
\end{proof}

\section{Complex integration lemma}\label{sec:appendixc}

\begin{lemma}\label{lem:integrals}
Let us assume a counterclockwise contour $\gamma_n$ which encircles $\fr{n\pi}{\ell_i}$ once and does not encircle any other zeros of $\sin{(k\ell_i)}$. Then
\[
\begin{array}{ll}
{\rm a)}\ \fr{1}{2\pi i}\doint_{\gamma_n} \cot{(k\ell_i)} \,\mathrm{d}k = \frac{1}{\ell_i}, n\in \mathbb{Z} &
{\rm b)}\ \fr{1}{2\pi i}\doint_{\gamma_n} \frac{1}{\sin{(k\ell_i)}} \,\mathrm{d}k = \frac{(-1)^n}{\ell_i}, n\in \mathbb{Z}\eqskip
{\rm c)}\ \fr{1}{2\pi i}\doint_{\gamma_n} \frac{1}{k}\cot{(k\ell_i)} \,\mathrm{d}k = \frac{1}{n \pi}, n \in \mathbb{Z}\backslash\{0\}&
{\rm d)}\ \fr{1}{2\pi i}\doint_{\gamma_0} \frac{1}{k}\cot{(k\ell_i)} \,\mathrm{d}k = 0\eqskip
{\rm e)}\ \fr{1}{2\pi i}\doint_{\gamma_n} \frac{1}{k\sin{(k\ell_i)}} \,\mathrm{d}k = \frac{(-1)^n}{n \pi}, n \in \mathbb{Z}\backslash\{0\}&
{\rm f)}\ \fr{1}{2\pi i}\doint_{\gamma_0} \frac{1}{k\sin{(k\ell_i)}} \,\mathrm{d}k = 0\eqskip
{\rm g)}\ \fr{1}{2\pi i}\doint_{\gamma_n} \frac{1}{k}\cot^2{(k\ell_i)} \,\mathrm{d}k = -\frac{1}{n^2 \pi^2}, n \in \mathbb{Z}\backslash\{0\}&
{\rm h)}\ \fr{1}{2\pi i}\doint_{\gamma_0} \frac{1}{k}\cot^2{(k\ell_i)} \,\mathrm{d}k = -\frac{2}{3}\eqskip
{\rm i)}\ \fr{1}{2\pi i}\doint_{\gamma_n} \frac{\cot{(k\ell_i)}}{k\sin{(k\ell_i)}} \,\mathrm{d}k = -\frac{(-1)^n}{n^2 \pi^2}, n \in \mathbb{Z}\backslash\{0\}&
{\rm j)}\ \fr{1}{2\pi i}\doint_{\gamma_0} \frac{\cot{(k\ell_i)}}{k\sin{(k\ell_i)}} \,\mathrm{d}k = -\frac{1}{6}\eqskip
{\rm k)}\ \fr{1}{2\pi i}\doint_{\gamma_n} \frac{1}{k\sin^2{(k\ell_i)}} \,\mathrm{d}k = -\frac{1}{n^2 \pi^2}, n \in \mathbb{Z}\backslash\{0\}&
{\rm l)}\ \fr{1}{2\pi i}\doint_{\gamma_0} \frac{1}{k\sin^2{(k\ell_i)}} \,\mathrm{d}k = \frac{1}{3}\eqskip
{\rm l)}\ \fr{1}{2\pi i}\doint_{\gamma_0} \frac{1}{k} \,\mathrm{d}k = 1
\end{array}
\]
\end{lemma}
\begin{proof}
The lemma can be proven by standard complex analysis techniques, i.e. the residue theorem, see e.g. \cite{Bur}.
\end{proof}

\section*{Acknowledgements}
P.F. was partially supported by the Funda{\c{c}}{\~a}o para a Ci{\^e}ncia e a Tecnologia, Portugal, through project UIDB/00208/2020.
J.L. was supported by the project ``International mobilities for research activities of the University of Hradec Kr\'alov\'e'' CZ.02.2.69/0.0/0.0/16{\_}027/0008487.
J.L. thanks the University of Lisbon for its hospitality during his stay in Lisbon. The authors are grateful to the reviewer for the suggestions which
helped to improve the manuscript.  Data sharing not applicable to this article as no datasets were generated or analysed during the current study. This is a preprint of an article published in Anal. Math. Phys. The final authenticated version is available online at: \texttt{https://doi.org/10.1007/s13324-021-00487-3}.

\def\cprime{$'$}

\end{document}